\newcommand{\mc}{\mathcal}
\newcommand{\Tr}{\mathrm{Tr}}
\newcommand{\id}{I}
\def\<{\langle}\def\>{\rangle}
\newtheorem{theorem}{Theorem}[section]
\newtheorem{defi}[theorem]{Definition}
\newtheorem{proposition}[theorem]{Proposition}
\newenvironment{proof}[1][Proof:]{\begin{trivlist}
\item[\hskip \labelsep {\bfseries #1}]}{\end{trivlist}
\hfill$\square$}
\begin{document}
\title{Entropic time-energy uncertainty relations: An algebraic approach}
\author{Christian Bertoni}
\email{chr.bertoni@gmail.com}
\address{Institute for Theoretical Physics, ETH Z\"urich, 8093 Z\"urich, Switzerland}
\author{Yuxiang Yang}
\email{yangyu@phys.ethz.ch}
\address{Institute for Theoretical Physics, ETH Z\"urich, 8093 Z\"urich, Switzerland}
\author{Joseph M. Renes}
\email{joerenes@gmail.com}
\address{Institute for Theoretical Physics, ETH Z\"urich, 8093 Z\"urich, Switzerland}
\begin{abstract}
We address entropic uncertainty relations between time and energy or, more precisely, between measurements of an observable $G$ and the displacement $r$ of the $G$-generated evolution $e^{-ir G}$.
We derive lower bounds on the entropic uncertainty in two frequently considered scenarios, which can be illustrated as two different guessing games in which the role of the guessers are fixed or not.  In particular, our bound for the first game improves the previous result by Coles et al.\ \cite{TEEUR}.
Our derivation uses as a subroutine a recently proposed novel algebraic method \cite{SSA}, which can in general be used to derive a wider class of entropic uncertainty principles.

\end{abstract}

\maketitle

%\tableofcontents
\section{Introduction}
Uncertainty principles are a cornerstone of modern physics~\cite{heisenberg_uber_1927}. The most famous instantiation is perhaps the Kennard relation \cite{kennard1927} $\sigma_x \sigma_p \geq  \hbar/2$
where $\sigma_x$ and $\sigma_p$ are the standard deviations of the measurement of the position and the momentum of a particle respectively. 
%More generally, if $V$ and $W$ are two observables of a quantum system, the Robertson relation \cite{robertson} holds
%$\sigma_{V} \sigma_{W} \geq\left|\frac{1}{2 i}\langle[\hat{V}, \hat{W}]\rangle\right|$.
Entropic uncertainty relations, in contrast, offer an operational interpretation of the uncertainty principle, which is often more desireable in applications such as quantum cryptography.
The most well-known entropic uncertainty relation was derived by Maassen and Uffink \cite{MU}: Let $\rho$ be the density matrix of a system $A$ and $\mathcal{E}_V$ and $\mathcal{E}_W$ be the measurement quantum channels for the observables $V$ and $W$, then
\begin{equation}\label{mu}
S(A)_{\mathcal{E}_V(\rho)}+S(A)_{\mathcal{E}_W(\rho)}\geq -\log{\max_{k,j}|\langle v_j|w_k\rangle|^2}\,,
\end{equation}
where $|v_i\rangle$ and $|w_i\rangle$ are the eigenvectors of $V$ and $W$ and $S(A)_\rho$ is the von Neumann entropy of the state $\rho$ on system $A$. 
This relation can be interpreted as a guessing game: Alice has the quantum state $\rho$ and can choose whether to measure $V$ or $W$, Bob wins if he can correctly guess the result of the measurement. Equation \eqref{mu} prevents Bob from perfectly winning this game, provided the right hand side is non zero, i.e.\ $V$ and $W$ do not commute.
Indeed, if $S(A)_{\mathcal{E}_V(\rho)}=0$, meaning that he can perfectly guess the measurement result of $V$, then the inequality implies $S(A)_{\mathcal{E}_W(\rho)}\geq-\log{\max_{k,j}|\langle v_j|w_k\rangle|^2}$, and thus Bob will not be able to perfectly guess the measurement result of $W$.

The entropic uncertainty relation in Eq.\ \eqref{mu} has been further extended to account for the effect of quantum memories \cite{renes_conjectured_2009,qmem}:  If  a quantum memory $B$ is entangled with the original system $A$, Bob could use it to deduce Alice's measurement outcomes. There are essentially two possible uses of the memory, corresponding to two guessing games. The first game, also referred to as \emph{the tripartite game}, concerns splitting the quantum memory into two parts $B_1$ and $B_2$, where $B_1$ is used for guessing $V$ and $B_2$ is used for guessing $W$. Then the following entropic uncertainty relation holds \cite{renes_conjectured_2009,qmem}
\begin{equation}\label{uqmem0}
S(A|B_1)_{\mathcal{E}_V(\rho)}+S(A|B_2)_{\mathcal{E}_W(\rho)}\geq -\log{\max_{k,j}|\langle v_j|w_k\rangle|^2}\,,
\end{equation}
where the measurements are performed only on the system $A$  and $S(A|B)_\rho$ is the quantum conditional entropy of $A$ conditioned on $B$.
On the other hand, the second game regards the memory as a whole and is referred to as \emph{the bipartite game}. In this case, the uncertainty relation becomes
\begin{equation}\label{uqmem}
S(A|B)_{\mathcal{E}_V(\rho)}+S(A|B)_{\mathcal{E}_W(\rho)}\geq -\log{\max_{k,j}|\langle v_j|w_k\rangle|^2} +S(A|B)_\rho\,.
\end{equation}
In this case, Bob, who keeps the quantum memory, can increase his chance of winning by referring to it. 
In fact, since the quantum conditional entropy can be negative, Bob can win the game with certainty by using a suitable entangled state for which the right hand side of Eq.\ \eqref{uqmem} vanishes.

The two guessing games differ only in whether the memory is split into two parts or not.
This difference highlights a subtlety of the uncertainty principle that it is impossible to simultaneously know the values of two noncommuting observables of the same system. 
On the one hand, by splitting the memory, it is possible to provide guesses for both observables at the same time. 
The fact that the tripartite game cannot be won then matches the uncertainty principle. 
On the other hand, in each round of the bipartite game Bob only has to guess one of the observables. Therefore, 
using a quantum memory can allow him to win the game with certainty, in seeming contravention of the uncertainty principle.

 %In \cite{SSA} a general framework to obtain such relations is presented, in particular the approach of the authors is not constrained to using measurements as quantum channel, but a broader range of channels called conditional expectations, which will be defined later. This allows to prove, in addition to the results summarized above, uncertainty relations about operations beyond measurements.

Various extensions of these entropic uncertainty relations with memory have been put forward  [see, e.g., Refs.\ \cite{tomamichel2011uncertainty,Coles2011Information,coles2012uncertainty,furrer2014position} and Ref.\ \cite{survey} for a full survey]. 
A natural question is whether there is an entropic time-energy uncertainty relation. 
This is a more subtle situation than relations involving measurements of observables, since an ideal time observable does not exist for finite dimensional systems \cite{pauli,pauli2012general,susskind1964quantum}. Possible ways out include defining an approximate time operator \cite{rastegin2019entropic}, or considering the uncertainty of measuring the duration of evolutions, i.e.\ measuring the state as a quantum clock, instead of directly measuring time.
%Coles and coauthors \cite{TEEUR} followed the later, more operational approach. They considered the setting of the first guessing game, where the memory is split into two parts, and established a lower bound on the time-energy uncertainty in a similar form as Eq.\ (\ref{uqmem0}).
%It is worth noticing that their uncertainty relation was obtained in a scenario that deviates from that of Ref.\ \cite{qmem}. In Ref.\ \cite{TEEUR}, the memory $B$ is divided into two parts $B_1$ and $B_2$, with $B_1$ serving solely as the memory of the energy and $B_2$ serving solely as the memory of the evolution. 

In this work, we take the latter approach and study the tradeoff between uncertainties of measuring an observable $G$ (e.g.\ the Hamiltonian of the system) and determining a parameter $r$ of the unitary evolution $e^{-ir G}$.
Unlike most of the previous works, whose proofs are built on basic properties of quantum entropies and distances, we take a new algebraic approach that makes use of a strong subadditivity on algebras, developed recently by Gao, Junge, and Laracuente \cite{SSA}.
As a result, we obtain entropic uncertain relations for both of the aforementioned guessing games.
% of Ref.\ \cite{qmem} and the scenario of Ref.\ \cite{TEEUR}.
Entropic time-uncertainty relations were recently studied in the setting of the tripartite guessing game by Coles et al.\ \cite{TEEUR}.
In comparison, we show that our bound is strictly tighter than their result for von Neumann entropies, though they also study more general R\'enyi entropies. %' one in Ref.\ \cite{TEEUR}. 
%For the second game, where the memory is used as a single entity and can be used to either to guess the observable or to guess the evolution in each round of the game, we establish a novel bound for the time-energy uncertainty relation. 
 
The rest of the paper is arranged as follows. In Section \ref{gamesandrel}, we define the two guessing games under consideration and state our main results on the entropic uncertainty relation. In Section \ref{prelim}, we prepare for the proofs of the uncertainty relations by introducing a few useful results from Ref.\ \cite{SSA}. In Section \ref{rot}, we prove our bounds on the entropic uncertainties. In Section \ref{numerical}, we present some numerical examples that show the tightness and advantage of our results. Finally, in Section \ref{conclusions}, we conclude with a few discussions.

\section{Guessing games and entropic uncertainty relations}\label{gamesandrel}
In this section, we introduce the setting and the main results of our paper. 
Entropic uncertainty relations arise naturally from guessing games, where players are asked to make guesses on random operations performed by an extra player.
We will propose here two different  guessing games that lead to different entropic uncertain relations.

We focus on  guessing games involving a game operator $A$ and one or multiple guessers, where the operation performed by $A$ is either a measurement of an observable $G$ or a rotation $\rho\mapsto e^{-iGr_k}\rho e^{iGr_k}$ generated by $G$ with $r_k$ being a random number drawn from a fixed finite set $\{r_k\}_{k=1}^{|R|}$.  

Now we are ready to introduce the first guessing game: 
\begin{defi}[The tripartite guessing game] The game concerns two guessers $B_1$ and $B_2$ and runs as follows:
	\begin{enumerate}
		\item[0.] (Setup) Three players $A$, $B_1$, and $B_2$ share a quantum state $\rho_{AB_1B_2}$, fix a probability distribution $\{p_k\}_{k=1}^{|R|}$, a generator $G$ acting on $A$, and a set of rotations $\{r_k\}_{k=1}^{|R|}$.
		\item[1.] $A$ tosses a coin to choose between measuring $G$ or applying a rotation $e^{-iGr_k}$.\footnote{Since the rotation does not affect the measurement, we could also say that $A$ always applies a random rotation and then randomly chooses whether to measure $G$. This version is more easily interpretable if one wants to consider time evolution as the rotation.}
		\item[2.a] If $A$ gets a head, she chooses an $r_k$ following a probability distribution $\{p_k\}_{k=1}^{|R|}$ and applies $e^{-iGr_k}$ to her part of $
		\rho$. She then sends the rotated state to $B_1$, with instructions to guess $r_k$.  
		\item[2.b] If $A$ gets a tail, she measures $G$ on her part of $\rho$ and asks $B_2$ to guess the measurement outcome.
		\item[3.] Accordingly, $B_1$ or $B_2$ provides his guess. 
	\end{enumerate}
	\end{defi}
A graphical illustration of this game is portrayed in Figure \ref{drawing_first_game}.
\begin{figure}
	\centering
	\includegraphics[width=70mm]{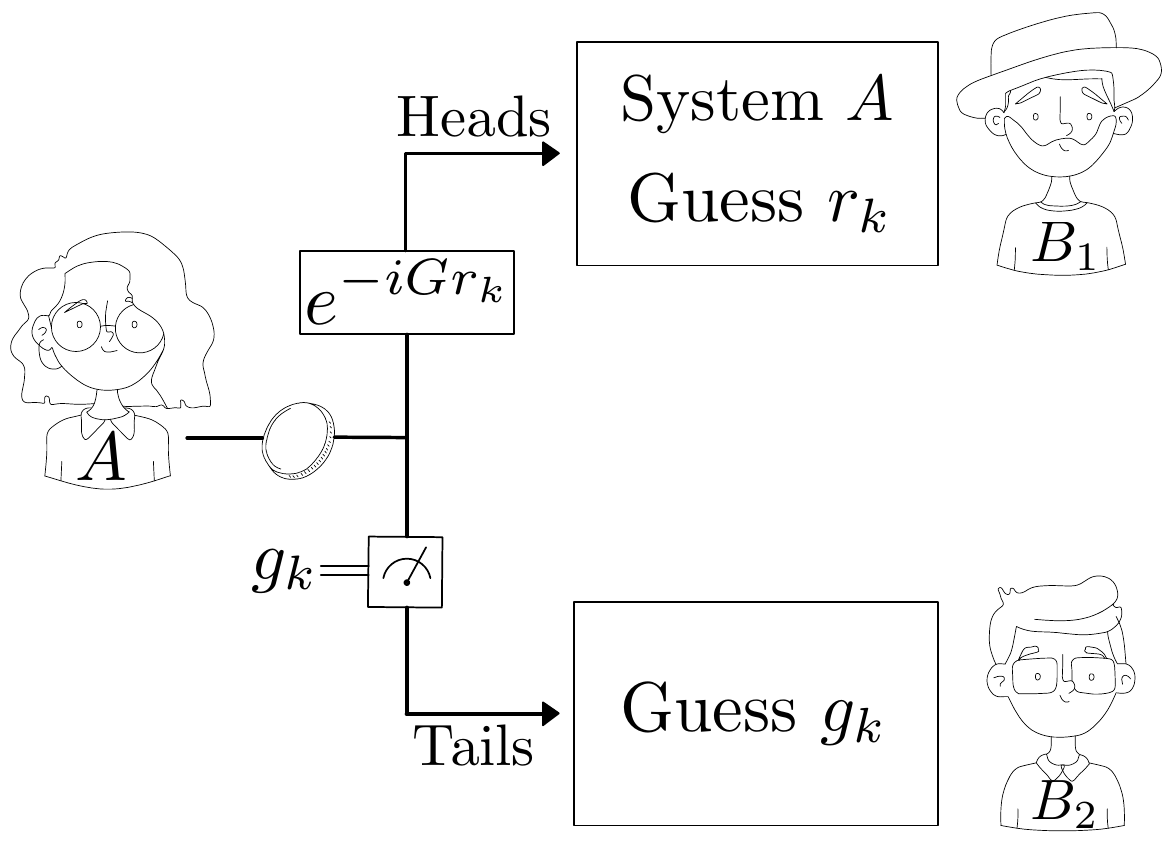}
	\caption{{\bf The tripartite guessing game. } The above figure illustrates the setting of the tripartite guessing game, where two guessers $B_1$ and $B_2$ are assigned different tasks. Depending on the outcome of a coin toss, Alice asks either $B_1$ to guess a rotation or $B_2$ to guess a measurement outcome.}
		\label{drawing_first_game}
\end{figure}
To quantify the uncertainty of the guesses in the above game, we use an ancillary Hilbert space $\mathcal{H}_R$ for the random number $\{r_k\}$, which has probability distribution $\{p_k\}$. 
If $A$ chooses to perform the rotation, the state afterwards is
\begin{equation}\label{kappa}
\kappa_{RAB_1B_2}=\sum_{k=1}^{|R|} p_k|r_k\rangle\langle r_k| \otimes e^{-iGr_k}\rho_{AB_1B_2}e^{iGr_k}\,.
\end{equation}
If $A$ chooses to measure $G$, the state afterwards is 
\begin{equation}\label{omega}
\omega_{AB_1B_2}=\sum_{k=1}^{|A|}|g_k \rangle \langle g_k|\langle g_k| \rho_{AB_1B_2}|g_k \rangle\,,
\end{equation}
where $\{|g_k\>\}$ are the eigenstates of $G$ with eigenvalue $g_k$. 
%For this game, the most general strategy Bob could use is to chose a partitioning of the system $B\cong B_1\otimes B_2$, he will use $B_2$ to infer the measurement outcome and $B_1$ to produce the rotation guess. 
The quantity
\begin{equation}
S(R|AB_1)_\kappa+S(A|B_2)_\omega\,,
\end{equation}
represents the total uncertainty of the game, in the sense that the larger it is, the more difficult it is to guess correctly.

Our first result is a lower bound of the total uncertainty, as described in the following theorem.
\begin{theorem}
The total uncertainty of the tripartite game is lower bounded as
\begin{equation}\label{strel}
S(R|AB_1)_\kappa+S(A|B_2)_\omega  \geq S(R)_\kappa+ D(\kappa_{AB_1} || \omega_{AB_1}) + \max\{0,I(A:B_1)_\omega-I(B_1:B_2)_\rho+S(A|B_1B_2)_\rho\}\,.
\end{equation}
The bound is saturated if $\rho_{AB_1B_2}$ is pure or $\rho_{AB_1B_2}=\rho_{AB_1}\otimes\rho_{B_2}$.
\end{theorem}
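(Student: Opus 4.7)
My plan is to manipulate the left-hand side into the advertised structure and then verify the two halves of the maximum separately. First, because $\kappa_{RAB_1}$ is block-diagonal over $R$ and the $G$-rotations act as unitaries on $A$ that commute with the partial trace over $B_2$, the conditional states $\mathrm{Tr}_{B_2}[e^{-iGr_k}\rho e^{iGr_k}] = e^{-iGr_k}\rho_{AB_1}e^{iGr_k}$ all have entropy $S(AB_1)_\rho$, so
\begin{equation*}
S(R|AB_1)_\kappa = S(R)_\kappa + S(AB_1)_\rho - S(AB_1)_\kappa.
\end{equation*}
Second, the $G$-measurement on $A$ commutes with every $e^{-iGr_k}$ and dephases the classical register $R$, giving $\omega_{AB_1} = \mc E_G(\kappa_{AB_1})$ where $\mc E_G$ is the pinching in the $G$-eigenbasis of $A$. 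The pinching identity $S(\mc E(\sigma)) = S(\sigma) + D(\sigma\Vert\mc E(\sigma))$ therefore yields $S(AB_1)_\omega = S(AB_1)_\kappa + D(\kappa_{AB_1}\Vert\omega_{AB_1})$, and substitution gives the preliminary rewriting
\begin{equation*}
S(R|AB_1)_\kappa + S(A|B_2)_\omega = S(R)_\kappa + D(\kappa_{AB_1}\Vert\omega_{AB_1}) + T,
\end{equation*}
with $T := S(AB_1)_\rho - S(AB_1)_\omega + S(A|B_2)_\omega$. The theorem thus reduces to the two inequalities $T \geq 0$ and $T \geq X$, where $X := I(A:B_1)_\omega - I(B_1:B_2)_\rho + S(A|B_1B_2)_\rho$.

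For $T \geq X$, I would expand both sides in terms of marginal von Neumann entropies, using $S(B_1)_\omega = S(B_1)_\rho$ and $S(B_2)_\omega = S(B_2)_\rho$ (the measurement acts only on $A$). The bulk of the terms cancel, leaving $T - X = S(B_2|A)_\omega - S(B_2|AB_1)_\rho$, so it suffices to show $S(B_2|A)_\omega \geq S(B_2|AB_1)_\rho$. I would prove this in two links: first, the data-processing inequality for $D(\cdot\Vert\omega)$ under the partial trace $AB_2 \to A$ gives $D(\rho_{AB_2}\Vert\omega_{AB_2}) \geq D(\rho_A\Vert\omega_A)$, which rearranges to $S(B_2|A)_\omega \geq S(B_2|A)_\rho$; second, ordinary strong subadditivity gives $S(B_2|A)_\rho \geq S(B_2|AB_1)_\rho$.

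The bound $T \geq 0$ is equivalent to $S(A|B_2)_\omega \geq D(\rho_{AB_1}\Vert\omega_{AB_1})$, and this is where I anticipate the main obstacle: the naive DPI for the relative entropy bounds $D(\rho_{AB_1}\Vert\omega_{AB_1})$ from above by $D(\rho\Vert\omega)$ rather than from below by $S(A|B_2)_\omega$. I would therefore invoke the algebraic strong subadditivity of Ref.~\cite{SSA}, regarding $\mc E_G$ as the conditional expectation onto the abelian subalgebra of operators diagonal in the $G$-eigenbasis of $A$; the technical work is to pair this subalgebra with the $B_1$ tensor factor so that the subalgebraic SSA specialises to the required inequality. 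Once both halves are established, the saturation claim follows by tracking equality through each step: the DPI/SSA chain in the $T \geq X$ argument becomes tight precisely when $\rho_{AB_1B_2} = \rho_{AB_1} \otimes \rho_{B_2}$, while purity of $\rho$ triggers the complementarity identities that make the algebraic-SSA step tight.
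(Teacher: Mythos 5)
Your reduction is set up correctly: with $T:=S(AB_1)_\rho-S(AB_1)_\omega+S(A|B_2)_\omega$ and $X:=I(A:B_1)_\omega-I(B_1:B_2)_\rho+S(A|B_1B_2)_\rho$, the identities $S(R|AB_1)_\kappa=S(R)_\kappa+S(AB_1)_\rho-S(AB_1)_\kappa$, $\omega_{AB_1}=\mathcal{P}_G(\kappa_{AB_1})=\mathcal{P}_G(\rho_{AB_1})$ and the pinching identity do give $S(R|AB_1)_\kappa+S(A|B_2)_\omega=S(R)_\kappa+D(\kappa_{AB_1}\Vert\omega_{AB_1})+T$, so the theorem is indeed equivalent to $T\geq\max\{0,X\}$. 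Your proof of $T\geq X$ is complete and correct: the bookkeeping $T-X=S(B_2|A)_\omega-S(B_2|AB_1)_\rho$ checks out (using $S(B_1)_\omega=S(B_1)_\rho$, $S(B_2)_\omega=S(B_2)_\rho$), and the two links—monotonicity of conditional entropy under the pinching applied to the conditioning system $A$, then strong subadditivity—are valid. This half is genuinely different from the paper, which derives the same inequality \eqref{eqsecrel} from a second commuting square on $RAB_1B_2$; your route is more elementary and needs no algebraic machinery.

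The gap is the other half, $T\geq 0$, i.e.\ $S(A|B_2)_\omega\geq D(\rho_{AB_1}\Vert\omega_{AB_1})$. This is exactly the paper's first bound \eqref{39} (the strengthening of Coles et al.'s Eq.\ \eqref{relcolesold}) and it carries the actual uncertainty content of the theorem, yet you only announce that pairing the diagonal subalgebra with the $B_1$ factor in the framework of \cite{SSA} will ``specialise to the required inequality,'' without exhibiting the commuting square or explaining how $B_2$ enters at all. That is the crux: no commuting square built from subalgebras of operators on $RAB_1$ alone can produce the $B_2$-conditioned entropy on the left-hand side; the missing ingredient is the dilation/duality step (Proposition~\ref{compl} in the paper). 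Concretely, with the Stinespring isometry $V=\sum_k|g_k\rangle_E\otimes|g_k\rangle\langle g_k|_A$ of the pinching and a purification $\rho_{AB_1B_2B'}$, one has $D(\rho_{AB_1}\Vert\omega_{AB_1})=-S(E|AB_1)_{V\rho V^\dagger}=S(E|B_2B')_{V\rho V^\dagger}=S(A|B_2B')_\omega\leq S(A|B_2)_\omega$, the middle step by duality of conditional entropy on the pure global state and the last by strong subadditivity; equivalently, verify $T=0$ for pure $\rho_{AB_1B_2}$ and purify. Until something of this kind is supplied, the theorem is not proved. Two smaller points on saturation: for pure $\rho_{AB_1B_2}$ you must also check that the maximum is attained at $0$, i.e.\ $X\leq 0$ (true, since then $X=-S(B_1|A)_\omega-S(B_2)_\rho\leq 0$ because $\omega_{AB_1}$ is classical on $A$); and ``precisely when $\rho_{AB_1B_2}=\rho_{AB_1}\otimes\rho_{B_2}$'' overstates the tightness of your DPI/SSA chain—product is sufficient but not necessary—though only sufficiency is needed, and in the product case you additionally need $X\geq 0$, which follows from $T=X$ together with the yet-unproven $T\geq 0$.
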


Our bound (\ref{strel}) manifests a tradeoff relation between guessing the measurement outcome and guessing the rotation. 
In particular, it shows that it is impossible for both guesses to be perfect for the same state (unless $R$ is trivial), since the right hand side of the bound (\ref{strel}) is always positive. If the conditional entropy $S(A|B_2)_\omega$ is really low, meaning that $B_2$ can easily guess the measurement value, then the entropy of the rotation chosen must be large to satisfy the bound, making it hard for $B_1$ to guess precisely which rotation has been applied.

Note that, in the case $p_k=\frac{1}{|R|}$ for all $k$, the term $S(R)_\kappa$ is simply $\log |R|$.
 Clearly, to minimize the uncertainty, $B_1$ and $B_2$ want to reduce the last term in the bound (\ref{strel}). From this we can deduce the following conditions for making the uncertainty small:
\begin{itemize}
	\item $B_1$ and $B_2$ need to be as correlated as possible so as to maximize $I(B_1:B_2)_\rho$.
	\item The system $B_1$, which is used to guess the rotation, should be as uncorrelated as possible with the measurement result so as to minimize $I(A:B_1)_\omega$.
	\item $A$ and $B_1B_2$ should be entangled so that $S(A|B_1B_2)_\rho$ is negative.
	\end{itemize}

The guessing game proposed by  Coles et al.\ \cite{TEEUR} is a special case of the tripartite game presented here.  
They showed in \cite[Eq.\ (8)]{TEEUR} that when the distribution over  $R$ is uniform, the total uncertainty can be bounded as
\begin{equation}\label{relcoles}
  S(R|AB_1)_\kappa+S(A|B_2)_\omega \geq \log{|R|}\,.
\end{equation}
Furthermore, for $B_1=\mathbb{C}$ is trivial and $B_2=B$, they find a stronger bound in \cite[Eq.\ (E10)]{TEEUR}:
\begin{equation}\label{relcolesold}
S(R|A)_\kappa+S(A|B)_\omega \geq S(R)_\kappa +D(\kappa_{A}||\omega_{A})\,,
\end{equation}
which is tight if $\rho^{AB}$ is pure.  It is clear that our bound (\ref{strel}) is tighter since the additional term $\max\{0,I(A:B_1)_\omega-I(B_1:B_2)_\rho+S(A|B_1B_2)_\rho\}$ is positive.

In the first game, the system $B$ is broken into two subsystems $B_1$ and $B_2$ and distributed to individual players, whose tasks are fixed. Alternatively, we can consider a variation of the game where $B$ is given to a single player, who may be given either task (to guess the measurement outcome or the rotation).
\begin{defi}[The bipartite guessing game] The game concerns only one guesser $B$ and runs as follows:
	\begin{enumerate}
		\item[0.] (Setup) Two players $A$ and $B$ share a quantum state $\rho_{AB}$, fix a probability distribution $\{p_k\}_{k=1}^{|R|}$, a generator $G$ acting on $A$, and a set of rotations $\{r_k\}_{k=1}^{|R|}$.
		\item[1.] $A$ tosses a coin to choose between measuring $G$ or applying a rotation $e^{-iGr_k}$. 
		\item[2.a] If $A$ gets a head, she chooses an $r_k$ following a probability distribution $\{p_k\}_{k=1}^{|R|}$ and applies $e^{-iGr_k}$ to her part of $
		\rho$. She then sends the rotated state to $B$, with instructions to guess $r_k$.  
		\item[2.b] If $A$ gets a tail, she measures $G$ on her part of $\rho$ and asks $B$ to guess the measurement outcome.
		\item[3.] $B$ provides his guess.
	\end{enumerate}
	\end{defi}
A graphical illustration of this game is portrayed in Figure \ref{drawing_second_game}.
\begin{figure}[h]
	\centering
	\includegraphics[width=70mm]{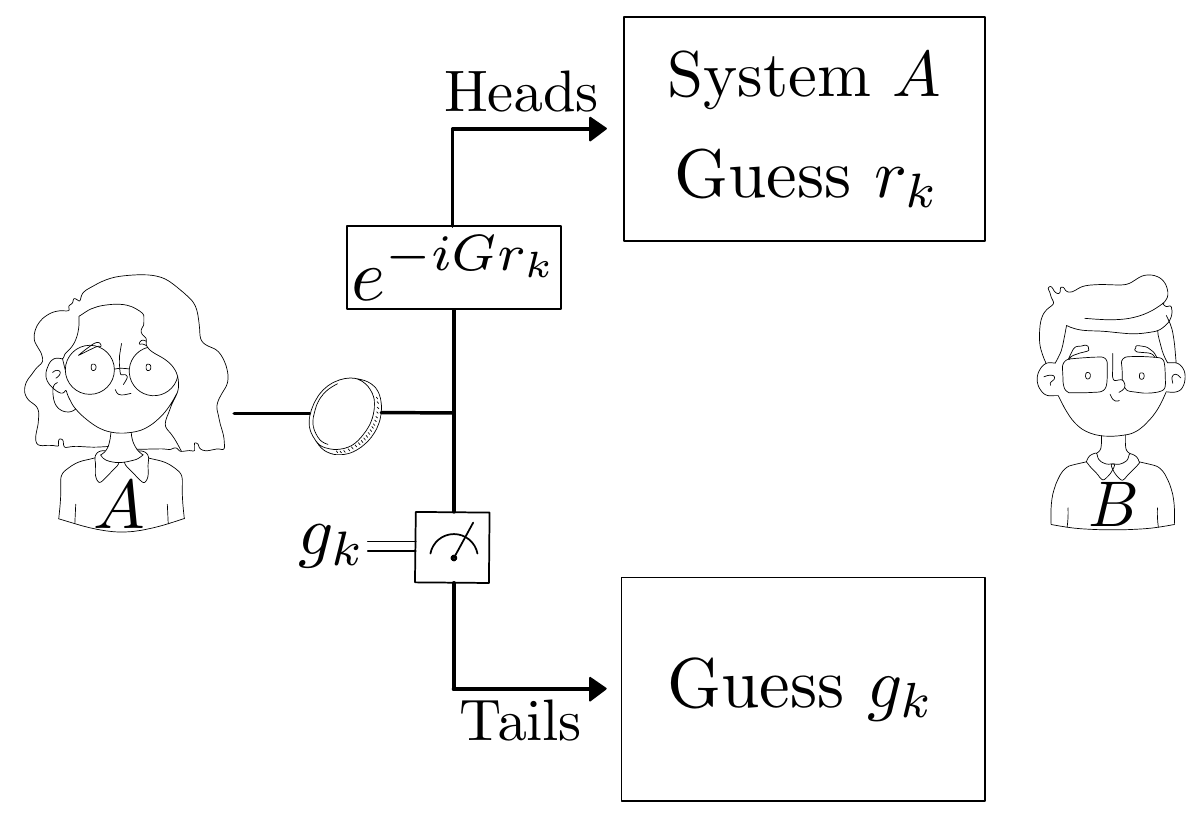}
	\caption{{\bf The bipartite guessing game. } The above figure illustrates the setting of the bipartite guessing game, where the guesser may be asked to guess either a rotation or a measurement outcome.}
	\label{drawing_second_game}
\end{figure}

In this game the quantity that characterizes the uncertainty is 
\begin{equation}
S(R|AB)_\kappa+S(A|B)_\omega\,,
\end{equation}
where $\kappa$ and $\omega$ are defined by Eqs.\ (\ref{kappa}) and (\ref{omega}), respectively.
Just as the tripartite game, we can bound this total uncertainty as well.
\begin{theorem}
The total uncertainty for the bipartite game is lower bounded as
\begin{equation}\label{eub-second}
S(R|AB)_\kappa+S(A|B)_\omega\geq S(R)_\kappa+ D(\kappa_A||\omega_A)+S(A|B)_\rho\,.
\end{equation}
The bound is saturated if $\rho_{AB}=\rho_A\otimes\rho_B$ is a product state or if $\rho_A$ is a pure eigenstate of $G$. 
\end{theorem}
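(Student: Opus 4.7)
The plan is to establish (\ref{eub-second}) as an exact equality up to a single invocation of the data processing inequality. Concretely, I will show that
\[
S(R|AB)_\kappa + S(A|B)_\omega \;=\; S(R)_\kappa + S(A|B)_\rho + D(\kappa_{AB}\|\omega_{AB}),
\]
and then apply monotonicity of the relative entropy under the partial trace over $B$ to conclude $D(\kappa_{AB}\|\omega_{AB}) \geq D(\kappa_A\|\omega_A)$, which is exactly (\ref{eub-second}).

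The first ingredient is the classical-quantum structure of $\kappa_{RAB}$ on $R$. Writing $U_k = e^{-iGr_k}$, the state $\kappa_{RAB} = \sum_k p_k |r_k\rangle\langle r_k|\otimes U_k\rho_{AB}U_k^\dagger$ has quantum branches all of the same entropy as $\rho_{AB}$, so $S(RAB)_\kappa = S(R)_\kappa + S(AB)_\rho$ and hence $S(R|AB)_\kappa = S(R)_\kappa + S(AB)_\rho - S(AB)_\kappa$. The second ingredient is that the measurement acts only on $A$, so $\omega_B = \rho_B$ and therefore $S(A|B)_\omega - S(A|B)_\rho = S(AB)_\omega - S(AB)_\rho$. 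Adding these two identities, the $S(AB)_\rho$ terms cancel and I obtain $S(R|AB)_\kappa + S(A|B)_\omega = S(R)_\kappa + S(A|B)_\rho + [S(AB)_\omega - S(AB)_\kappa]$.

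The crucial algebraic observation is that each $U_k$ is diagonal in the $G$-eigenbasis, so the random-rotation channel commutes with the pinching $\mathcal{E}_G$, and consequently $\mathcal{E}_G(\kappa_{AB}) = \omega_{AB}$. Combined with the well-known identity $D(\sigma\|\mathcal{E}_G(\sigma)) = S(\mathcal{E}_G(\sigma)) - S(\sigma)$ valid for any pinching (which follows from $[\mathcal{E}_G(\sigma), P_j]=0$ and $\mathrm{Tr}[\sigma \log \mathcal{E}_G(\sigma)] = \mathrm{Tr}[\mathcal{E}_G(\sigma) \log \mathcal{E}_G(\sigma)]$), this converts $S(AB)_\omega - S(AB)_\kappa$ into $D(\kappa_{AB}\|\omega_{AB})$, yielding the displayed equality. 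A single use of the data processing inequality, $D(\kappa_{AB}\|\omega_{AB}) \geq D(\kappa_A\|\omega_A)$, then completes the proof of (\ref{eub-second}).

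For the saturation statement, the only inequality in the argument is the partial trace over $B$, so equality holds iff this partial trace preserves the relative entropy. If $\rho_{AB} = \rho_A \otimes \rho_B$, then $\kappa_{AB} = \kappa_A\otimes\rho_B$ and $\omega_{AB} = \omega_A\otimes\rho_B$, so the two relative entropies factorize and agree. If $\rho_A$ is a pure eigenstate of $G$, purity of the marginal forces $\rho_{AB}$ to be of product form, and in addition $\kappa_A = \omega_A = \rho_A$ since $U_k$ acts by a global phase on $\rho_A$; both sides of (\ref{eub-second}) then collapse to $S(R)_\kappa$. I do not anticipate a real obstacle: the whole argument is essentially bookkeeping of entropies, with the only nontrivial input being the pinching identity, and the hardest conceptual step is simply recognizing that $\mathcal{E}_G$ and the random $G$-rotation channel commute so that $\mathcal{E}_G(\kappa_{AB})=\omega_{AB}$.
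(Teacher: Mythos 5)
Your proof is correct, and it reaches the bound by a more elementary route than the paper. You first establish the exact identity $S(R|AB)_\kappa+S(A|B)_\omega=S(R)_\kappa+S(A|B)_\rho+D(\kappa_{AB}\|\omega_{AB})$ from three ingredients the paper also uses implicitly (the cq-structure of $\kappa_{RAB}$, i.e.\ $S(RAB)_\kappa=S(R)_\kappa+S(AB)_\rho$, which is Eq.~\eqref{triv}; $\omega_B=\rho_B$; and the pinching identity together with $\mathcal{P}_G(\kappa_{AB})=\omega_{AB}$, which is Eq.~\eqref{relAE}), and then invoke a single data-processing step $D(\kappa_{AB}\|\omega_{AB})\ge D(\kappa_A\|\omega_A)$. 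The paper instead applies Theorem~\ref{commsq} to the commuting square built from the embedding $A\id_B\subset AB$ and the pinching $\tilde AB\subset AB$, obtaining Eq.~\eqref{nontriv}, $S(A)_\kappa+S(AB)_\omega\ge S(AB)_\kappa+S(A)_\omega$; after the pinching identity this is literally the same inequality as your monotonicity step, so the two arguments coincide at their mathematical core, but you bypass the algebraic machinery of Ref.~\cite{SSA} entirely. What your version buys is a self-contained derivation and a cleaner equality analysis: since everything before the partial trace is an identity, saturation is exactly the condition $D(\kappa_{AB}\|\omega_{AB})=D(\kappa_A\|\omega_A)$, which you verify for product states and (via purity of the marginal) for a pure $G$-eigenstate, matching the paper's sufficient conditions obtained there through the existence of a recovery map. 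What the paper's route buys is uniformity: the same commuting-square template also yields the tripartite bounds and generalizes beyond this setting, which is the methodological point of the work. One small bonus of your derivation worth noting is that it shows the left-hand side actually equals $S(R)_\kappa+S(A|B)_\rho+D(\kappa_{AB}\|\omega_{AB})$, a sharper statement from which Eq.~\eqref{eub-second} follows by monotonicity.
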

An intriguing distinction between this bound and the bound for the tripartite game (\ref{strel}) is that $B$ may be able to always guess correctly.
This is analogous to the bound for the uncertainty principle in the presence of quantum memory \cite{qmem}, in the sense that quantum correlations that make $S(A|B)_\rho$ negative can reduce the bound  (\ref{eub-second}) to zero. 
To see this, let us consider a simple example in which Alice and Bob hold a qubit each and the two qubits are in the maximally entangled state. 
Furthermore, take $G=\sigma_z$, $|R|=2$ and the uniform distribution for the rotations. In this case $\kappa_{AB}=\omega_{AB}=\frac{1}{2}(|00\rangle\langle 00| +|11\rangle\langle 11|)  $.  Then clearly the right hand side is $0$ as the relative entropy is $0$ and $S(A|B)_\rho=-1$. Moreover one may verify that $S(RAB)_\kappa=1$ and thus the left hand side is also $0$. Intuitively, in this case the rotations have the same effect of a $\sigma_z$ measurement, and Bob can apply the same strategy in both cases. It also means that it is necessary for Bob to use entanglement to win the game: the game is impossible to win perfectly using only a classical memory.

\section{Preliminary: a general framework for entropic uncertainty relations}\label{prelim}
In this section, we introduce part of the main results of Ref.\ \cite{SSA} that will be used in our proof. 
%All proofs can be found in \cite{SSA}.
\subsection{Commuting squares and uncertainty relations}
%\begin{defi}[Conditional expectation]\label{ce}
Let $M$ be an algebra of observables and let a $N\subset M$ be subalgebra. For instance, $M$ may be the observables on a bipartite system, and $N$ the observables on just one system. 
The \emph{conditional expectation} onto $N$ is the unique surjective CPTP and unital map $\mathcal{E}_N :M\to N$ such that for all $\rho\in M, \sigma \in N$
\begin{equation}
\Tr(\sigma\mathcal{E}_N(\rho))=\Tr(\sigma \rho)\,.
\end{equation}
%\end{defi}

%\begin{defi}[Asymmetry with respect to a subalgebra]
	Given a state $\rho\in M$, 
	%Let $M$ be a Von Neumann algebra and $N\subset M$ be a subalgebra, for a state $\rho\in M$ 
	the \emph{asymmetry measure} of $\rho$ with respect to $N$ is defined as
	\begin{equation}
	D^N(\rho):=\inf_{\sigma \in s(N)} D(\rho || \sigma)\,,
	\end{equation}
	where $D(\cdot||\cdot)$ is the relative entropy and $s(N)$ denotes the states on $N$.
%\end{defi}
When $N$ is the image of a conditional expectation $	\mc{E}_N$, we have  
\begin{equation}\label{distancemeasure}
	D^N(\rho)=D(\rho||\mathcal{E}_N(\rho))=S(N)_{\mathcal{E}_N(\rho)}-S(M)_\rho\,,
\end{equation} where $S$ is the von Neumann entropy.
We remark that $D^N$, albeit not a distance measure, captures the distinction between $N$ and $M$.
\begin{defi} [Commuting square]
	A set of four observable algebras satisfying the inclusions
\begin{equation}
\label{eq:commutingsquaresymbol}
\begin{pmatrix}
N &\subset & M\\
\cup & & \cup \\
R & \subset & T
\end{pmatrix}
\end{equation}
is called a commuting square if the conditional expectations satisfy

\begin{equation}
\mathcal{E}_N \circ \mathcal{E}_T =\mathcal{E}_T \circ \mathcal{E}_N = \mathcal{E}_R\,.
\end{equation}
\end{defi}
The following theorem will be the core of our proof, which says that one entropic uncertainty relation can be identified from each commuting square.
\begin{theorem}\label{commsq} Let 
$N,M,R,T$ form a commuting square as in \eqref{eq:commutingsquaresymbol}. Then for all $\rho \in M$

\begin{equation}
S(N)_{\mathcal{E}_N(\rho)}+S(T)_{\mathcal{E}_T(\rho)}\geq S(M)_\rho + S(R)_{\mathcal{E}_R(\rho)}\,,
\end{equation}
which is equivalent to 
\begin{equation}
D^N(\rho)+D^T(\rho)\geq D^R(\rho)\,.
\end{equation}
The relation is saturated if and only if there exists a CPTP map $\mathcal{R}$ such that

\begin{equation}\label{comm-uncertainty1}
\mathcal{R}(\mathcal{E}_N(\rho))=\rho \qquad \mathcal{R}(\mathcal{E}_R(\rho))=\mathcal{E}_T(\rho)\,.
\end{equation}
or equivalently 
\begin{equation}\label{comm-uncertainty2}
\mathcal{R}(\mathcal{E}_T(\rho))=\rho \qquad \mathcal{R}(\mathcal{E}_R(\rho))=\mathcal{E}_N(\rho)\,.
\end{equation}
\end{theorem}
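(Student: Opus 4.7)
The plan is to derive the inequality as a single application of the data-processing inequality (DPI) for the relative entropy, with the commuting-square identity $\mathcal{E}_T\circ\mathcal{E}_N = \mathcal{E}_R$ doing all the essential work. Equivalence of the two formulations is immediate from \eqref{distancemeasure}: for any $X\in\{N,T,R\}$ one has $D^X(\rho)=S(X)_{\mathcal{E}_X(\rho)}-S(M)_\rho$, so
\[
D^N(\rho)+D^T(\rho)-D^R(\rho) = S(N)_{\mathcal{E}_N(\rho)}+S(T)_{\mathcal{E}_T(\rho)}-S(M)_\rho-S(R)_{\mathcal{E}_R(\rho)},
\]
and it suffices to prove the entropic form.

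For the inequality itself, I would apply DPI with the channel $\mathcal{E}_T$ to the pair $(\rho,\mathcal{E}_N(\rho))$:
\[
D^N(\rho)=D(\rho\,\|\,\mathcal{E}_N(\rho)) \;\geq\; D\bigl(\mathcal{E}_T(\rho)\,\|\,\mathcal{E}_T\mathcal{E}_N(\rho)\bigr) \;=\; D\bigl(\mathcal{E}_T(\rho)\,\|\,\mathcal{E}_R(\rho)\bigr),
\]
the last equality being exactly the commuting-square condition. Expanding the right-hand side gives
\[
D\bigl(\mathcal{E}_T(\rho)\,\|\,\mathcal{E}_R(\rho)\bigr) = -S(T)_{\mathcal{E}_T(\rho)} - \Tr\bigl[\mathcal{E}_T(\rho)\log\mathcal{E}_R(\rho)\bigr].
\]
Since $\log\mathcal{E}_R(\rho)\in R\subset T$, the defining property of the conditional expectation $\mathcal{E}_T$ collapses the cross term to $\Tr[\mathcal{E}_T(\rho)\log\mathcal{E}_R(\rho)]=\Tr[\rho\log\mathcal{E}_R(\rho)]=\Tr[\mathcal{E}_R(\rho)\log\mathcal{E}_R(\rho)]=-S(R)_{\mathcal{E}_R(\rho)}$. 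Substituting $D^N(\rho)=S(N)_{\mathcal{E}_N(\rho)}-S(M)_\rho$ on the left and rearranging then produces the claim in one line.

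For the saturation statement I would invoke the Petz-type characterization of equality in DPI: $D(\rho\|\sigma)=D(\Phi(\rho)\|\Phi(\sigma))$ iff there exists a CPTP recovery map $\mathcal{R}$ with $\mathcal{R}\Phi(\rho)=\rho$ and $\mathcal{R}\Phi(\sigma)=\sigma$. Applied to the DPI step above (with $\Phi=\mathcal{E}_T$ and $\sigma=\mathcal{E}_N(\rho)$, so $\Phi(\sigma)=\mathcal{E}_R(\rho)$) this is exactly the pair \eqref{comm-uncertainty2}. The equivalent form \eqref{comm-uncertainty1} falls out by running the identical argument with the roles of $\mathcal{E}_N$ and $\mathcal{E}_T$ exchanged, i.e.\ starting from $D(\rho\|\mathcal{E}_T(\rho))$ and processing through $\mathcal{E}_N$; the inequality produced is the same, but the recovery map now satisfies \eqref{comm-uncertainty1}.

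There is no substantive obstacle in this plan: the proof is essentially a one-line application of DPI together with the commuting-square axiom. The only step that deserves a moment's care is the cross-term identity that collapses $\Tr[\mathcal{E}_T(\rho)\log\mathcal{E}_R(\rho)]$ to $-S(R)_{\mathcal{E}_R(\rho)}$, which relies on the inclusion $R\subset T$ together with the trace-preservation property of conditional expectations against subalgebra elements.
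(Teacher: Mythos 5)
Your proof is correct. One contextual point: the paper does not actually prove Theorem \ref{commsq} itself --- it is imported verbatim from Ref.\ \cite{SSA} as a preliminary --- so there is no in-paper proof to compare against; your argument is a legitimate self-contained derivation and it rests on the same mechanism that underlies the result in \cite{SSA}, namely the chain-rule identity $D(\rho\|\mathcal{E}_R(\rho))=D(\rho\|\mathcal{E}_T(\rho))+D(\mathcal{E}_T(\rho)\|\mathcal{E}_R(\rho))$ rephrased as a single data-processing step $D(\rho\|\mathcal{E}_N(\rho))\geq D(\mathcal{E}_T(\rho)\|\mathcal{E}_T\mathcal{E}_N(\rho))=D(\mathcal{E}_T(\rho)\|\mathcal{E}_R(\rho))$. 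Your cross-term collapse is valid: $\log\mathcal{E}_R(\rho)$ lies in $R\subset T$ (finite-dimensional subalgebras are closed under functional calculus on the support, and the defining duality $\Tr(\sigma\mathcal{E}_T(\rho))=\Tr(\sigma\rho)$ holds for all Hermitian $\sigma\in T$ by linearity, not just states), though you should flag the standard support caveat when $\mathcal{E}_R(\rho)$ is not full rank. The saturation claim also goes through: since the cross-term steps are exact identities, equality in the theorem is equivalent to equality in the single DPI step, and the Petz/HJPW characterization gives precisely the existence of a CPTP map satisfying \eqref{comm-uncertainty2}; running the symmetric argument through $\mathcal{E}_N$ (using the other half of the commuting-square condition, $\mathcal{E}_N\circ\mathcal{E}_T=\mathcal{E}_R$) yields the same scalar inequality, so \eqref{comm-uncertainty1} and \eqref{comm-uncertainty2} are each equivalent to saturation and hence to each other --- it is worth stating this last logical step explicitly, since the two recovery maps are in general different maps and the ``equivalence'' is at the level of existence, not of the map itself.
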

Eqs.\ (\ref{comm-uncertainty1}) and (\ref{comm-uncertainty2}) are uncertainty relations with respect to a commuting square, which we will use to derive bounds on the time-energy uncertainty.

\subsection{Examples of conditional expectations}
We provide here some examples of conditional expectations that will be useful later. From now on, Latin uppercase letters will be used to refer to the algebra of Hermitian operators on a corresponding Hilbert space.

\subsubsection{Embedding}\label{embedding}
Let $AB$ be the algebra of Hermitian operators on $\mathcal{H}_A\otimes \mc{H}_B$. We want to find a conditional expectation that takes us to the algebra $B$. One may notice that the partial trace is not a conditional expectation, as it is not unital. To solve this problem, following Example 2.2 in \cite{SSA}, instead of embedding $B\subset AB$ we embed $I_A\otimes B\subset AB$ where $I_A\simeq \mathbb{C}$ is the algebra generated by $\{cI_A: c\in \mathbb{C}\}$. The embedding is done by the map

\begin{equation}
\mc{T}_A(\rho_{AB})=\frac{1}{|A|}I_A\otimes\rho_B\,,
\end{equation}
where $\rho_B=\Tr_A[\rho_{AB}]$.
The map is clearly unital and CPTP. Let $\sigma = c\id_A\otimes \sigma_B \in I_A\otimes B$ and $\rho_{AB}\in AB$, moreover let $\{|a_k\rangle\}_{k=1}^{|A|}$ be a basis of $\mc{H}_A$. We have

\begin{equation}
\begin{aligned}
\Tr[\sigma \rho_{AB}]&=\Tr[c\id_A\otimes \sigma_B\rho_{AB}]\\&=c\Tr_B\left[\sum_k\sum_j \langle a_k|\left(|a_j\rangle\langle a_j| \otimes \sigma_B\right) \rho_{AB}|a_k\rangle\right]\\&=c\Tr_B\left[\sum_k\langle a_k|_A\sigma_B \rho_AB|a_k\rangle\right]=c\Tr_B[\sigma_B\rho_B]\,.
\end{aligned}
\end{equation}
On the other hand
\begin{equation}
\begin{aligned}
\Tr[\sigma \mc{T}_A(\rho_{AB})]&=\Tr\left[\left(c\id_A\otimes \sigma_B\right)\left(\frac{1}{|A|}\id_A\otimes \rho_B\right)\right]\\&=\frac{c}{|A|}\Tr[\id_A \otimes \sigma_B\rho_B]=c\Tr_B[\sigma_B\rho_B]\,.
\end{aligned}
\end{equation}
\subsubsection{Pinching}
Let $G$ be an observable with full support on $\mc{H}_A$ and $\{|g_k\rangle\}_{k=1}^{|A|}$ be the eigenbasis of $G$. The pinching map
\begin{equation}\label{pinching}
\mathcal{P}_G:\rho_A\mapsto \sum_{k=1}^{|A|} |g_k \rangle\langle g_k| \langle g_k|\rho_A|g_k\rangle
\end{equation}
is a conditional expectation onto $\mathrm{span}\{|g_k \rangle\langle g_k|\}_{k=1}^{|A|}$. Notice that this is also an algebra, consisting of all diagonal elements in $A$, and from now on we denote this kind of subalgebras by $\tilde{A}$. 

It is clear that the pinching map $\mc{P}_G:A\to\tilde{A}$ is unital and CPTP, and for $\sigma=\sum_{k=1}^{|A|} p_k |g_k\rangle\langle g_k|$ we have
\begin{equation}
\Tr(\sigma \mathcal{P}_G(\rho_A))=\sum_{k=1}^{|A|}p_k  \langle g_k|\rho_A|g_k\rangle
\end{equation}
and 
\begin{equation}
\Tr(\sigma \rho_A)=\sum_{k=1}^{|A|}\Tr(p_k |g_k\rangle\langle g_k| \rho_A)=\sum_{k,j=1}^{|A|}\langle g_j|p_k |g_k\rangle\langle g_k|\rho_A|g_j\rangle=\sum_{k=1}^{|A|}p_k  \langle g_k|\rho_A|g_k\rangle\,.
\end{equation}
Therefore, $\mc{P}_G$ is a conditional expectation on the subalgebra $\tilde{A}$ that is diagonal with respect to the eigenbasis of $G$.
\iffalse
\subsubsection{Group twirling}
Let $G$ be a finite group of order $|G|$ with a unitary representation $\{g_k\}_{k=1}^{|G|}$ acting on a Hilbert space $\mathcal{H}_A$. The following map, called the twirling, is a conditional expectation
\begin{equation}
\rho\mapsto \mc{E}_{A_G}(\rho)=\frac{1}{|G|} \sum_{k=1}^{|G|} g_k\rho g_k^{-1}
\end{equation}
and the image subalgebra is simply the subalgebra invariant under the action of the group: $A_G=\{\sigma \in A: g\sigma g^{-1}=\sigma \,\forall\, g\in G\}$. To see this just notice that $\mc{E}_{A_G}(\rho)$ is in $A_G$ by the rearrangement theorem for finite groups and that $\sigma\in A_G$ is the image of itself under $\mc{E}_{A_G}$, and we have for $\rho \in A, \sigma\in A_G$

\begin{equation}
\Tr(\sigma\mc{E}_{A_G}(\rho))=\frac{1}{|G|}\sum_{k=1}^{|G|}\Tr(\sigma g_k\rho g_k^{-1})=\frac{1}{|G|}\sum_{k=1}^{|G|}\Tr(g_k^{-1}\sigma g_k\rho )=\Tr(\sigma\rho).
\end{equation}
[REMINDER] This $g_k$ is used later as eigenvalues of $G$.
This is also valid for compact groups by replacing the sum divided by the order with an appropriate integral and Haar measure, as stated by \cite{SSA}.
\fi

\section{Proof of rotation-measurement uncertainty relations}\label{rot}
\subsection{The tripartite game}\label{secrel}
Here we prove the bound (\ref{strel}) on the entropic uncertainty in the tripartite game, where the guesser is supposed to guess both the energy and the rotation at the same time. 
%Unlike the standard procedure used in Ref.\ \cite{TEEUR}, our proof relies on the commuting square technique \cite{SSA}. 
The intuition is to find a commuting square of the following structure:
\begin{equation*}
\begin{pmatrix}
{\rm energy} &\subset & {\rm total}\\
\cup & & \cup \\
{\rm minimum} & \subset & {\rm time}
\end{pmatrix}\,,
\end{equation*}
Here ``time'' or ``energy" refers to a subalgebra of ``total'' whose distance to the ``total'' algebra is given by the time/energy uncertainty, and ``minimum'' is the intersection of ``time'' and ``energy'', determined by the conditional expectation.

In fact, we will find two distinct commuting squares as such. For each commuting square, we derive an independent bound on the entropic uncertainty relation of the tripartite game. Combining the two obtained bounds yields the stronger bound in Eq.\ (\ref{strel}).
\subsubsection{The first bound.}
The following proposition, stated and proved for quantum Rényi entropies in Ref.\ \cite{TEEUR}, will be useful.
\begin{proposition}\label{compl}Let $\rho_{M}$ be a state and $\mc{E}_{N}$ be a conditional expectation, then
\begin{equation}
D^{N}(\rho)=-S(E|M)_{U\rho U^\dagger}\,,
\end{equation}
where $U$ is a Stinespring dilation of $\mc{E}_{N}$ on $ME$. 
\end{proposition}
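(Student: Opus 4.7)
The plan is to reduce the proposition to a direct computation via the Stinespring representation of $\mathcal{E}_N$ combined with the already-derived formula (\ref{distancemeasure}). Since $N$ is the image of a conditional expectation, the excerpt gives the rewriting
\begin{equation*}
D^N(\rho)=S(N)_{\mathcal{E}_N(\rho)}-S(M)_\rho\,,
\end{equation*}
so I just need to identify each of these two entropies with an entropy of the dilated state $\tau:=U\rho U^\dagger$ on $ME$, and show that the combination equals $-S(E|M)_\tau = S(M)_\tau - S(ME)_\tau$.

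The computational heart of the argument has two steps. First, because $U$ is an isometry from $H_M$ into $H_M\otimes H_E$, applying it to $\rho$ preserves the spectrum, so $S(ME)_\tau = S(M)_\rho$; this handles the second summand. Second, by the defining property of a Stinespring dilation, $\mathrm{Tr}_E[\tau]=\mathcal{E}_N(\rho)$, and therefore the reduced state on $M$ of $\tau$ is exactly $\mathcal{E}_N(\rho)$, yielding $S(M)_\tau = S(\mathcal{E}_N(\rho))$. Combining,
\begin{equation*}
-S(E|M)_\tau = S(M)_\tau - S(ME)_\tau = S(\mathcal{E}_N(\rho)) - S(M)_\rho = D^N(\rho).
\end{equation*}

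The only subtlety to be careful about is the identification $S(N)_{\mathcal{E}_N(\rho)}=S(\mathcal{E}_N(\rho))$, i.e.\ that the von Neumann entropy of a state in the image of a conditional expectation does not depend on whether we regard it as a state on the subalgebra $N$ or on the ambient algebra $M$. For the finite-dimensional algebras used throughout the paper (direct sums of full matrix algebras), this is immediate because the spectrum of the density operator is the same object in both pictures. Once this point is noted, no further work is needed; I would expect the proof to be short, with the main conceptual content being the recognition that the Stinespring environment $E$ plays the role of a purifying/complementary system for $\mathcal{E}_N$, exactly in the way complementary channels relate $D(\rho\|\mathcal{E}(\rho))$ to a conditional entropy in the standard bipartite setting.
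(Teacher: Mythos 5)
Your proof is correct and follows essentially the same route as the paper's: both use Eq.~\eqref{distancemeasure}, identify $S(M)_{U\rho U^\dagger}$ with $S(N)_{\mathcal{E}_N(\rho)}$ via the defining property of the Stinespring dilation, and use the fact that conjugation by an isometry preserves the spectrum to get $S(ME)_{U\rho U^\dagger}=S(M)_\rho$. Your extra remark about the entropy being independent of whether the state is viewed on $N$ or on $M$ is a harmless clarification the paper leaves implicit.
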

\begin{proof} Eq.\ \eqref{distancemeasure} states that
\begin{equation}
D^{N}(\rho)=S(N)_{\mc{E}_{N}(\rho)}-S(M)_\rho\,.
\end{equation}
Clearly $S(M)_{U\rho U^\dagger}=S(N)_{\mc{E}_{N}(\rho)}$ as $U$ is a Stinespring dilation of $\mc{E}_{N}$. Moreover, conjugation by an isometry preserves the eigenvalues, we have $S(M)_\rho=S(ME)_{U\rho U^\dagger}$. Combining both equalities, we have
\begin{equation}
D^{N}(\rho)=S(M)_{U\rho U^\dagger}-S(ME)_{U\rho U^\dagger}=-S(E|M)_{U\rho U^\dagger}\,.
\end{equation}
\end{proof}

Let $\mc{H}_R$ be a register to store the parameter of rotation, namely that, if the state of $R$ is $\sum_k p_k|r_k\rangle\langle r_k|$, Alice will perform the rotation $e^{-ir_kG}$ with probability $p_k$. Here  $\tilde{R}$ is the diagonal subalgebra of $R$ with respect to the observable $\sum_k r_k|r_k\>\<r_k|$, and $\tilde{G}$ is the diagonal subalgebra of $A$ with respect to the observable $G$.
With this convention in mind, let us now consider the following commuting square 
\begin{equation*}
\begin{pmatrix}
\tilde{R}AB_1 &\subset & RAB_1\\
\cup & & \cup \\
\tilde{R}\tilde{A}B_1 & \subset & R\tilde{A}B_1
\end{pmatrix}\,,
\end{equation*}
where the conditional expectations are the simply corresponding pinching [see Eq.\ (\ref{pinching})]. 
For any state $\rho_{AB_1B_2}$, we define  $\phi_{RAB_1}=|\Omega\rangle\langle \Omega|_R \otimes \rho_{AB_1}$ with $|\Omega\rangle =\sum_k \sqrt{p_k}|r_k\rangle$. 

Now, let us consider the uncertainty relation of the state 
\begin{equation}
\psi_{RAB_1}= \sum_{k,j=1}^{|R|} \sqrt{p_kp_j}|r_k \rangle \langle r_j| \otimes e^{-iGr_k}\rho_{AB_1}e^{iGr_j}\,,
\end{equation}
obtained by applying the unitary $U=\sum_{k=1}^{|R|}|r_k\rangle\langle r_k|\otimes e^{-iGr_k}$ to $\phi_{RAB_1}$.
The conditional expectations result in the states
\begin{equation}
\begin{aligned}
\psi_{\tilde{R}AB_1}=&\sum_{k=1}^{|R|} p_k|r_k\rangle\langle r_k| \otimes e^{-iGr_k}\rho_{AB_1}e^{iGr_k}=\kappa_{RAB_1}\,,\\
\psi^{R\tilde{A}B_1}=&\sum_{k,j=1}^{|R|}\sum_{l=1}^{|A|}\sqrt{p_kp_j}|r_k\rangle\langle r_j| \otimes e^{-ig_l (r_k-r_j)}|g_l\rangle\langle g_l|\langle g_l|\rho_{AB_1}|g_l \rangle\,\text{, and}\\
\psi^{\tilde{R}\tilde{A}B_1}=&\sum_{k=1}^{|R|} p_k |r_k\rangle \langle r_k| \otimes \sum_{l=1}^{|A|} |g_l\rangle\langle g_l|\langle g_l|\rho_{AB_1}|g_l \rangle=\kappa_R \otimes \omega_{AB_1}\,.
\end{aligned}
\end{equation}
For a register $C$ and an arbitrary state $\rho$ on it, let $\mc{Q}_{C,\rho}$ be the discard and reprepare map
\begin{equation}
\begin{aligned} 
\mc{Q}_{C,\rho}(\sigma_C)= \rho_C
\end{aligned}
\end{equation}
that resets the register's state to $\rho$.
We have 
\begin{equation}\label{recovery}
\begin{aligned}
&U\mc{Q}_{AB_1,\rho}(\psi_{\tilde{R}\tilde{A}B_1}) U^\dagger=\psi_{\tilde{R}AB_1}\\
&U\mc{Q}_{AB_1,\rho}(\psi_{R\tilde{A}B_1}) U^\dagger=\psi_{RAB_1}\,.
\end{aligned}
\end{equation}
Therefore, $\mc{R}(\cdot):=U\mc{Q}_{AB_1,\rho}(\cdot) U^\dagger$ constitutes a valid recovery map. By Theorem \ref{commsq}, we have  
\begin{equation}\label{relraw}
D^{\tilde{R}AB_1}(\psi)+D^{R\tilde{A}B_1}(\psi)= D^{\tilde{R}\tilde{A}B_1}(\psi)\,.
\end{equation}
The following isometry is a Stinespring dilation on $AE$ of the pinching map on $A$
\begin{equation}
V=\sum_{k=1}^{|A|} |g_k\rangle_E \otimes|g_k\rangle\langle g_k|_A\,.
\end{equation}
Proposition\ \ref{compl} applied to the second term of Eq.\ (\ref{relraw}) yields the term $-S(E|RAB_1)_{V\rho V^\dagger}$. Consider a purification $\rho_{AB_1B_2B'}$ of $\rho_{AB_1B_2}$ (if $\rho_{AB_1B_2}$ is already pure $B'$ is trivial), using the duality of conditional entropy one gets  $-S(E|RAB_1)_{V\rho V^\dagger}=S(E|B_2B')_\omega$, with $\omega$ the state in Eq.\ (\ref{omega}). Since the complementary channel of pinching under the Stinespring dilation $V$ is also the same pinching, which means $\omega_E=\omega_A$, and thus $S(E|B_2B')_\omega=S(A|B_2B')_\omega$. Using Eq.\ \ref{distancemeasure} on the two remaining terms one gets for  $\rho_{AB_1B_2}$, we obtain
\begin{equation}
S(\tilde{R}AB_1)_\kappa + S(A|B_2B')_\omega = S(\tilde{R})_\kappa + S(\tilde{A}B_1)_\omega\,.
\end{equation} 
Abandoning the notation where one keeps track of which subalgebra the state is in for the more standard one and subtracting $S(AB_1)_\kappa$ from both sides, the relation becomes
\begin{equation}
S(R|AB_1)_\kappa+S(A|B_2B')_\omega = S(R)_\kappa +S(AB_1)_\omega -S(AB_1)_\kappa\,.
\end{equation} 
Since the pinching $\mc{P}_G$ (as the conditional expectation) on $\kappa_A$ yields $\omega_A$, Eq.\ \eqref{distancemeasure} implies 
\begin{equation}\label{relAE}
S(AB_1)_\omega -S(AB_1)_\kappa=D(\kappa_{AB_1}||\omega_{AB_1})\,,
\end{equation} 
and thus we can express the entropic uncertainty as
\begin{equation}\label{39-inter}
  S(R|AB_1)_\kappa+S(A|B_2B')_\omega= S(R)_\kappa + D(\kappa_{AB_1} || \omega_{EB_1})\,.
\end{equation}
Finally, using the strong subadditivity $S(A|B_2B')_\omega \leq S(A|B_2)_\omega $, we obtain the following bound on the entropic uncertainty
\begin{equation}\label{39}
 S(R|AB_1)_\kappa+S(A|B_2)_\omega\ge S(R)_\kappa + D(\kappa_{AB_1} || \omega_{EB_1})\,. 
\end{equation} 
From Eq.\ (\ref{39-inter}) it is immediate that the equality holds if and only if $I(A:B'|B_2)_\omega = 0 $, which is satisfied when $\rho_{AB_1B_2}$ is pure.

Notice that our bound (\ref{39}) holds for arbitrary $B_1$ and $B_2$, and any arbitrary state of $R$ (i.e. the distribution of the rotation parameter $\{r_k\}$ can be non-uniform). On the other hand, the previous result by Coles et al.\ \cite{TEEUR}, given by Eq.\ (\ref{relcoles}), does not have the second term on the right hand side of Eq.\ (\ref{39}) and assumes $R$ to have a uniform distribution.

\subsubsection{The second bound} 
Let us now consider an alternative commuting square:
\begin{equation}
\begin{pmatrix}
\tilde{R}AB_1\id_{B_2} &\subset &RAB_1B_2\\
\cup & & \cup \\
\tilde{R}\tilde{A}\id_{B_1B_2} & \subset & R\tilde{A}\id_{B_1}B_2
\end{pmatrix}.
\end{equation}
We start from the same state as before, namely
\begin{equation}
\psi_{RAB_1B_2}= \sum_{k,j=1}^{|R|} \sqrt{p_kp_j}|r_k \rangle \langle r_j| \otimes e^{-iGr_k}\rho_{AB_1B_2}e^{iGr_j}\,.
\end{equation}
For the new commuting square, using the uncertainty relation (\ref{comm-uncertainty1}), we get the relation
\begin{equation}
S(RAB_1)_\kappa + S(R\tilde{A}B_2)_\omega \geq S(RAB_1B_2)_\psi + S(\tilde{R}\tilde{A})_\omega\,.
\end{equation}

Notice that $\psi_{R\tilde{A}B_2}=U\left(|\Omega\rangle\langle\Omega|_R\otimes \sum_{k=1}^{|A|}\langle g_k|\rho_{AB_2}|g_k\rangle |g_k\rangle\langle g_k|\right) U^\dagger$ with $U=\sum_{k=1}^{|R|}|r_k\rangle\langle r_k|\otimes e^{-iGr_k}$, hence  $S(R\tilde{A}B_2)_\psi=S(\tilde{A}B_2)_\omega$. Similarly $\psi_{RAB_1B_2}=U\left(|\Omega\rangle\langle\Omega|_R\otimes \rho_{AB_1B_2} \right) U^\dagger$, thus $S(RAB_1B_2)=S(AB_1B_2)$. Moreover $\psi_{\tilde{R}\tilde{A}}$ is a product state.  Hence by subtracting $S(A)_\kappa+S(B_1B_2)_\omega$ from both sides and changing the notation like before
\begin{equation}\label{eqsecrel}
S(R|AB_1)_\kappa + S(A|B_2)_\omega\geq S(R)_\kappa + S(AB_1B_2)_\rho + S(A)_\omega-S(AB_1)_\kappa -S(B_2)_\rho\,.
\end{equation}
To have a better comparison with \eqref{39} we can write, using \eqref{relAE}

\begin{equation}
\begin{aligned} &S(AB_1B_2)_\rho + S(A)_\omega-S(AB_1)_\kappa -S(B_2)_\rho\\
=&D(\kappa_{AB_1} || \omega_{AB_1})+S(AB_1B_2)_\rho  -S(B_2)_\rho+S(A)_\omega-S(AB_1)_\omega\\
=&D(\kappa_{AB_1} || \omega_{AB_1}) +I(A:B_1)_\omega-I(B_1:B_2)_\rho+S(A|B_1B_2)_\rho\,.
\end{aligned}
\end{equation}
We can combine this with the previous relation and get, as promised
\begin{equation}\label{relA}
 S(R|AB_1)_\kappa+ S(A|B_2)_\omega \geq  S(R)_\kappa+ D(\kappa_{AB_1} || \omega_{AB_1}) + \max(0,I(A:B_1)_\omega-I(B_1:B_2)_\rho+S(A|B_1B_2)_\rho)\,.
\end{equation}
If $\rho_{AB_1B_2}$ is pure this bound simply reduces to the previous one: as a matter of fact in this case the new term vanishes, since:
\begin{equation}
S(AB_1B_2)_\rho-S(B_2)_\rho+S(A)_\omega-S(AB_1)_\omega=-S(B_1|A)_\omega-S(B_2)_\omega\leq 0
\end{equation} 
because $\omega_{AB_1}$ is classical in $A$. Hence since due to Eq.\ (\ref{39-inter}) the previous bound is saturated by pure states, this one is saturated as well. Otherwise, recall that by theorem \ref{commsq} the relation holds as an equality if there exists a recovery map $\mathcal{R}$ such that
\begin{equation}\label{recoveryPsi}
\mathcal{R}\left(\mc{E}_{RA\id_{B_1}B_2}(\rho)\right)=\rho\qquad
\mathcal{R}\left(\mc{E}_{\tilde{R}\tilde{A}\id_{B_1B_2}} (\rho)\right)=\mc{E}_{\tilde{R}\tilde{A}B_1\id_{B_2}}(\rho)
\end{equation}
hold for this particular $\rho_{RAB_1B_2}$.
If $\rho_{AB_1B_2}=\rho_{AB_1}\otimes \rho_{B_2}$ we may define $\mathcal{R}(\cdot):=U\mathcal{Q}_{\tilde{A}B_1}(\cdot)U^\dagger$, where $U=\sum_{k=1}^{|R|}|r_k\rangle\langle r_k|\otimes e^{-iGr_k}$ and $\mathcal{Q}_{\tilde{A}B_1}(\cdot)$ is the discard and prepare map 
\begin{equation}
\mathcal{Q}_{AB_1}(\sigma_{AB_1C})=\rho_{AB_1}\otimes\sigma_C
\end{equation}
where $C$ is any additional system beyond $AB_1$.
It is straightforward to check that $\mathcal{R}$ indeed satisfies Eq.\ (\ref{recoveryPsi}).
\subsubsection{Significance of the bounds.}
Let us comment on the significance of these bounds for the tripartite game. The right hand side of Eq.\ (\ref{relA}) is always positive, so the relation does in fact pose non trivial bounds on the probability of Bob to win the game, nevertheless it is worth noticing that
\begin{equation}\label{triv}
\kappa_{RA}=U\kappa_R \otimes \rho_AU^\dagger\,,
\end{equation}
with $U=\sum_{k=1}^{|R|}|r_k\rangle\langle r_k| \otimes e^{-iGr_k}$, which is unitary. Hence $S(RA)_\kappa=S(R)_\kappa +S(A)_\rho$. The relation in equation \eqref{relA} reduces to

\begin{equation}
S(AB_1)_\rho+S(AB_2)_\omega \geq S(AB_1)_\omega+S(B_2)_\rho+\max(0,I(A:B_1)_\omega-I(B_1:B_2)_\rho+S(A|B_1B_2)_\rho)\,.
\end{equation}
This is not a trivial bound, but it only involves the pinching map and it is not a statement about the rotation twirl. The problem is the artificial conditioning of the entropy $S(RA)_\kappa$. As a matter of fact, in light of Eq.\ \eqref{triv}, the non trivial contribution of the state $\kappa$ is the conditioning of the entropy. In the next section we will obtain a relation for the bipartite game by trying to make the conditioning of the entropy of the state $\kappa_{RAB_1B_2}$ appear naturally in the inequality.
\subsection{The bipartite game}\label{secondgame}
Let us now consider the second version of the game, this time Bob only has to guess either the rotation or the energy for each round of the game. One expects thus a constraint on the quantity $S(R|AB)_\kappa + S(A|B)_\omega$. To obtain such a relation, let us exploit the property in Eq.\ \eqref{triv} and try to get the term $S(AB)_\kappa$ on the right hand side naturally. Consider the following commuting square 
\begin{equation}
\begin{pmatrix}
A\id_B &\subset &AB\\
\cup & & \cup \\
\tilde{A}\id_B & \subset & \tilde{A}B
\end{pmatrix}
\end{equation}
and start from the state 
\begin{equation}
\kappa_{AB}=\sum_{k=1}^{|R|}p_ke^{-iGr_k}\rho_{AB} e^{iGr_k}\,.
\end{equation}
The state on $\tilde{A}B$ is just $\omega_{AB}$ and the $\log|B|$ terms cancel as always. The relation, keeping the notation $\tilde{A}\rightarrow A$, is
\begin{equation}\label{nontriv}
S(A)_\kappa+S(AB)_\omega\geq S(AB)_\kappa+S(A)_\omega\,.
\end{equation}
One can immediately see that this is a non trivial relation involving both the state $\kappa$ and the state $\omega$. We can now add $S(R)_\kappa+S(AB)_\rho$ on both sides, use Eq.\ \eqref{triv}, and subtract $S(B)_\rho$ to get
\begin{equation}
S(R|AB)_\kappa+S(A|B)_\omega\geq S(R)_\kappa+S(A|B)_\rho+S(A)_\omega-S(A)_\kappa\,.
\end{equation}
Using \eqref{relAE}, this can be rewritten  as 
\begin{equation}
S(R|AB)_\kappa+S(A|B)_\omega\geq S(R)_\kappa+ D(\kappa_A||\omega_A)+ S(A|B)_\rho\,.
\end{equation}
 Equality holds if \eqref{nontriv} takes equality, and this by theorem \ref{commsq} holds if there exists a recovery map
\begin{equation}
\mathcal{R}(\mathcal{E}_{\tilde{A}I_B})=\mathcal{E}_{AI_B}(\rho_{AB})\qquad
\mathcal{R}(\mathcal{E}_{\tilde{A}B})=\rho_{AB}
\end{equation}
If $\rho_{AB}$ is a product state we may simply take $\mc{R}$ to be $\mc{Q}_A$, the operation of resetting the state of $A$ to $\rho_A$ just as in section \ref{secrel}. If $\rho_A$ is a pure eigenstate of $G$ clearly the recovery map is the identity, hence in both of these cases the bound is saturated. 
 \section{Numerical calculations}\label{numerical}
 Here we present some explicit numerical results as an example of our bounds.
 \subsection{The tripartite game}\label{secplots1}
Our bound for the tripartite game is given by Eq.\ \eqref{strel},
which is saturated when either $\rho_{AB_1B_2}$ is pure or $\rho_{AB_1B_2}=\rho_{AB_1}\otimes\rho_{B_2}$. Let us restrict for the moment to the case $B_2\simeq B$, $B_1\simeq \mathbb{C}$, then the bound reduces to
  \begin{equation}\label{eur_nonsplit}
 S(R|A)_\kappa+S(A|B)_\omega  \geq S(R)_\kappa+ D(\kappa_{A} || \omega_{A}) + \max\{0,S(A|B)_\rho\}\,.
 \end{equation}
 This is to be compared to the one found in \cite{TEEUR}
   \begin{equation}\label{eur_coles}
 S(R|A)_\kappa+S(A|B)_\omega  \geq S(R)_\kappa+ D(\kappa_{A} || \omega_{A})\,.
 \end{equation}
 We take $|A|=|B|=2$, $|R|=6$ with random angles following a uniform distribution and $G=\sigma_x$. In the following the right and left hand sides of the bounds are computed and compared for
 
 \begin{equation}
 \rho_{AB}=|\psi\rangle\langle\psi|\otimes |\psi\rangle\langle\psi|\,,
 \end{equation}
 with $|\psi\rangle = \cos\frac\theta 2 |0\rangle +\sin\frac\theta 2 |1\rangle$, where $\theta\in[0,\pi]$. This is a pure product state. Random noise is added to either $|\psi\rangle\langle \psi|$ or $\rho$ itself to obtain a mixed product state or a mixed non product state respectively. The random noise is obtained by adding a random state produced by the function \lstinline{rand_dm} from the Python package QuTiP \cite{JOHANSSON20121760} and rescaling to obtain a trace one matrix. In Figure \ref{plots_first_game} the relevant quantities are plotted for the three cases of a pure product state, a mixed product state and a mixed non product state.

 \begin{figure}[t!]
\centering     %%% not \center
%\subfigure[Pure product state: the bound is saturated by  both inequalities.]{\label{fig:a}\includegraphics[width=60mm]{plots/first_game_pure_product}}
\subfigure[{\bf $\rho_{AB_1B_2}=\rho_{AB_1}\otimes\rho_{B_2}$ is a mixed product state.} ]{\label{fig:b}\includegraphics[width=60mm]{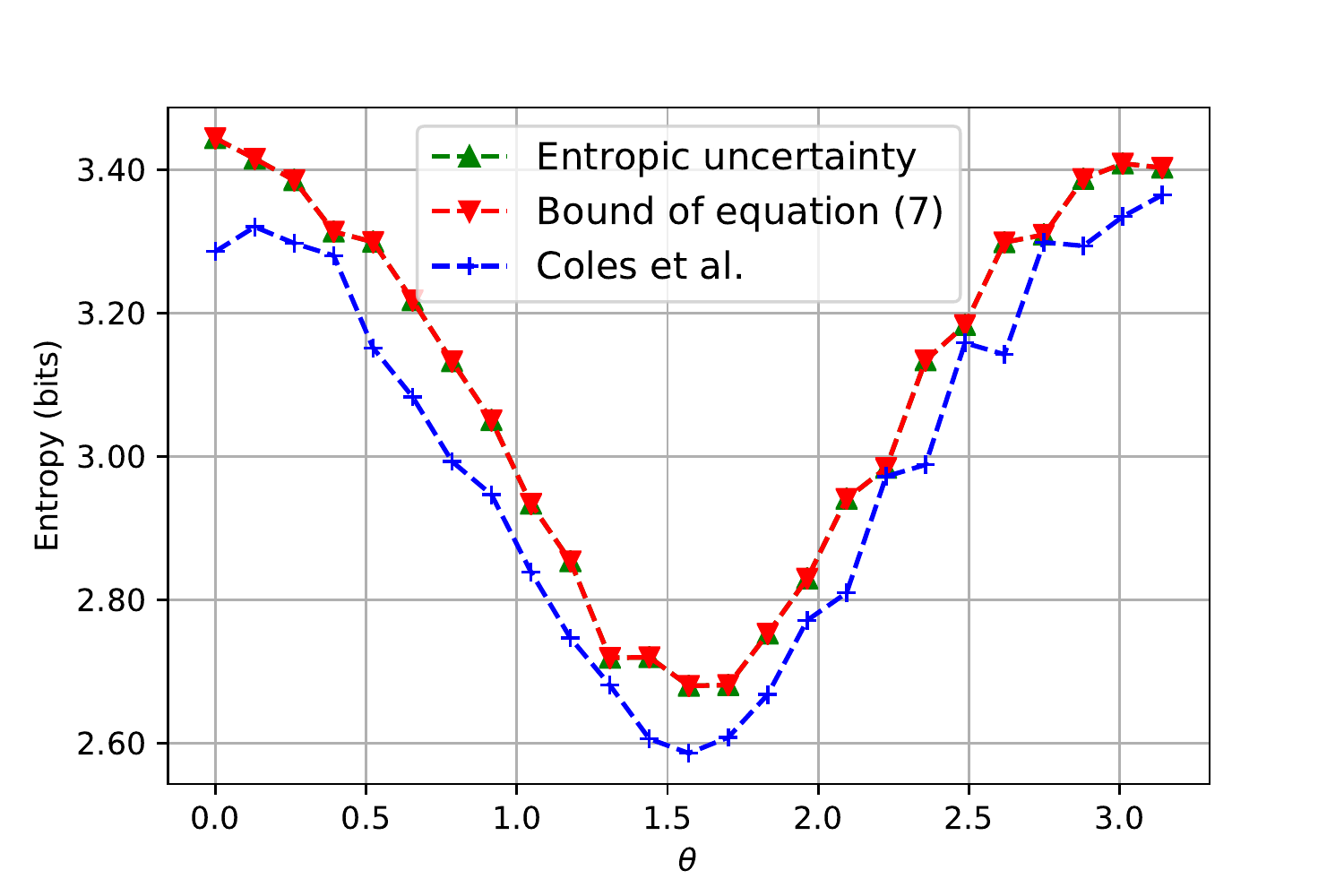}}
\subfigure[{\bf $\rho_{AB_1B_2}$ is a mixed non-product state.}]{\label{fig:b}\includegraphics[width=60mm]{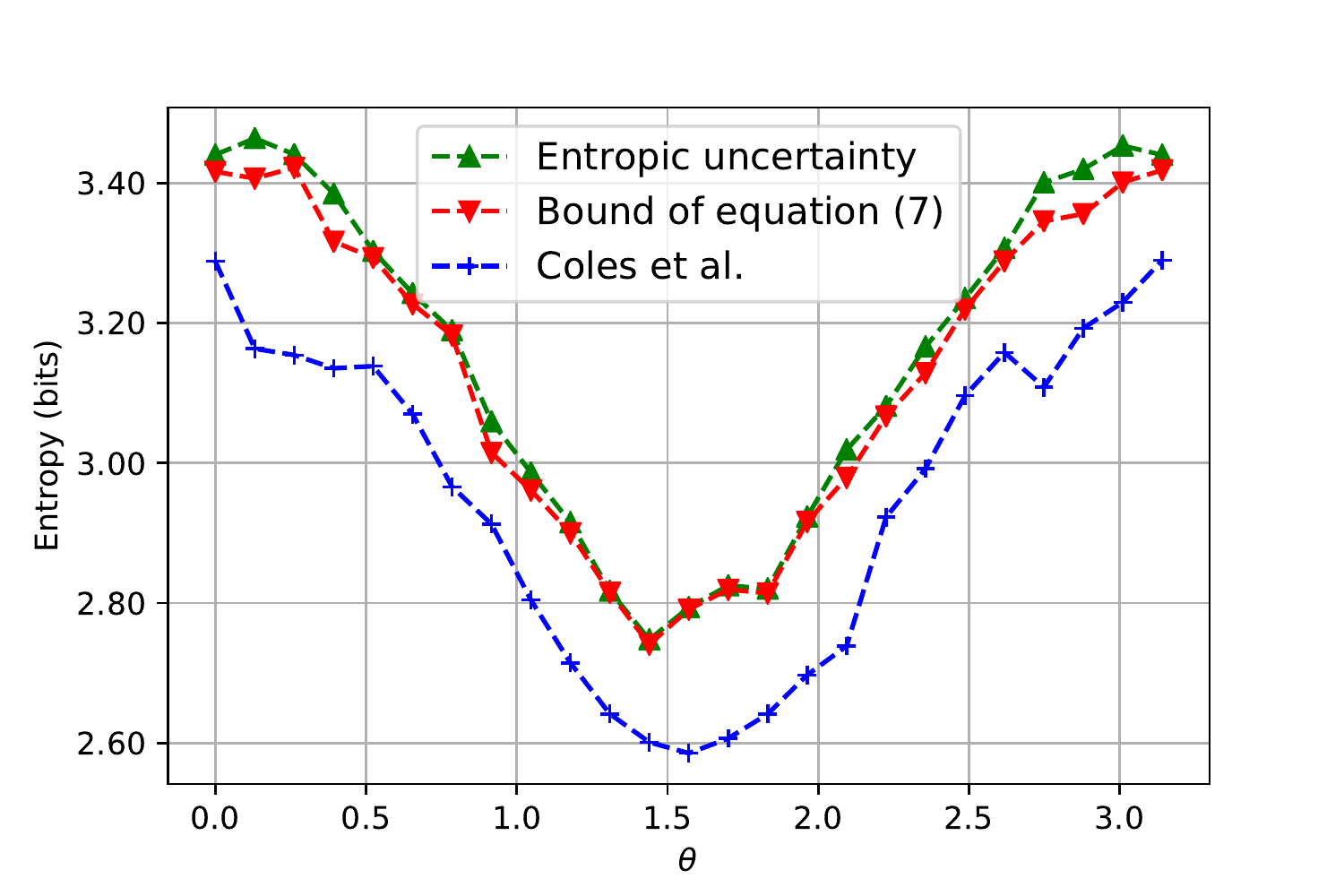}}
 	\caption{{\bf Comparison of bounds for the tripartite game when $B_1$ is trivial.} The above plots compare the bounds obtained by us [cf.\ Eq.\ (\ref{strel})] and by Coles et al. in Ref.\ \cite{TEEUR} for different states $\rho_{AB_1B_2}$. In the case when $\rho_{AB_1B_2}=\rho_{AB_1}\otimes\rho_{B_2}$, our bound (\ref{strel}) is tight whereas the one by Coles et al. \cite[Eq.\ (E10)]{TEEUR} is not. In the generic case, our bound is not tight but still better than the one in Ref.\ \cite{TEEUR}. }
 	\label{plots_first_game}
 \end{figure}
For the tripartite case, where both $B_1$ and $B_2$ are nontrivial, Eq.\ \eqref{strel} is to be compared with the one found in \cite{TEEUR}

\begin{equation}
S(R|AB_1)_\kappa+S(A|B_2)_\omega  \geq \log|R|\,.
\end{equation}
 Note that since in these computations the angles follow a uniform distribution, and thus $S(R)_\kappa=\log|R|$. In Figure \ref{plots_first_game_split} the relevant quantities are plotted taking $|B|=4$, $|B_1|=|B_2|=2$, for the state
 \begin{equation}
 \rho_{AB}=|\psi\rangle\langle\psi|\otimes |\psi\rangle\langle\psi|\otimes |\psi\rangle\langle\psi|
 \end{equation}
 with added random noise.
\begin{figure}
	\centering
	\includegraphics[width=60mm]{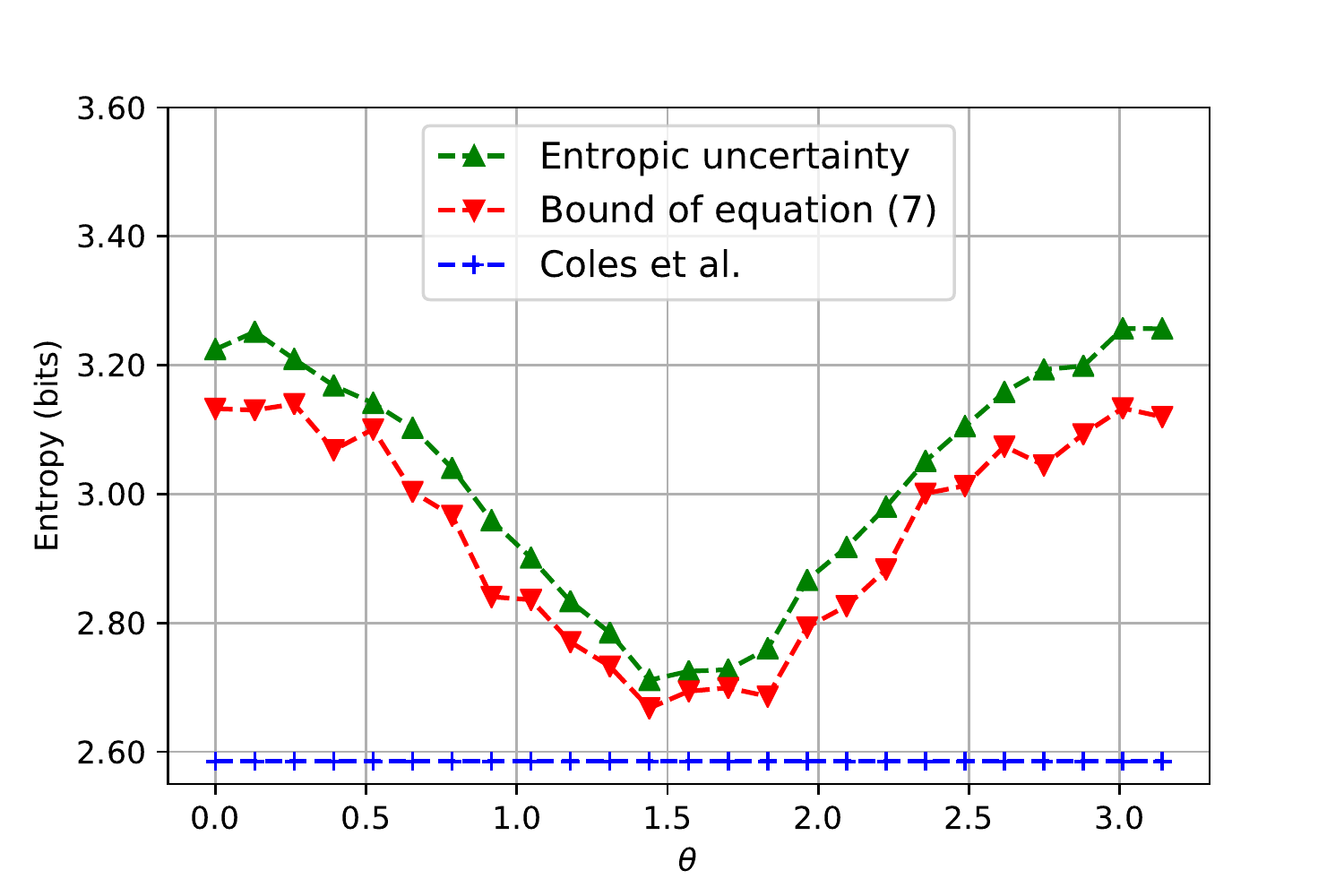}
	\caption{{\bf Comparison of bounds for the first game when $B_1$ is not trivial.} The above plot compares the bounds obtained by us [cf.\ Eq.\ (\ref{strel})] and by Coles et al.\ \cite[Eq.\ (8)]{TEEUR} for generic $\rho_{AB_1B_2}$. Notice that the bound \cite[Eq.\ (8)]{TEEUR} is simply $\log|R|$ and is thus independent of the state's parameter $\theta$. The plot manifests the gap between the entropic uncertainty and the bound by Coles et al., and that our bound is very close to the real uncertainty. }
	\label{plots_first_game_split}
\end{figure}
 \subsection{The bipartite game}
Recall that our bound for the entropic uncertainty in the bipartite game, given by Eq.\ \eqref{eub-second}, is $S(R|AB)_\kappa+S(A|B)_\omega\geq S(R)_\kappa+ D(\kappa_A||\omega_A)+S(A|B)_\rho$.
It is saturated if $\rho_{AB}$ is a product state or if it is a pure eigenstate of $G$. In Figure \ref{second_game_plots}, the bound is further tested for generic, non-product states generated in the same random way as in section \ref{secplots1} for $|A|=|B|=2$. It can be seen that the bound is still considerably, though not rigorously, tight for generic states.

\begin{figure}[t!]
	\centering     %%% not \center
	%\subfigure[Product state: the bound is saturated.]{\label{fig:a}\includegraphics[width=60mm]{plots/second_game_product}}
	%\subfigure[Non product state.]{\label{fig:b}
	\includegraphics[width=60mm]{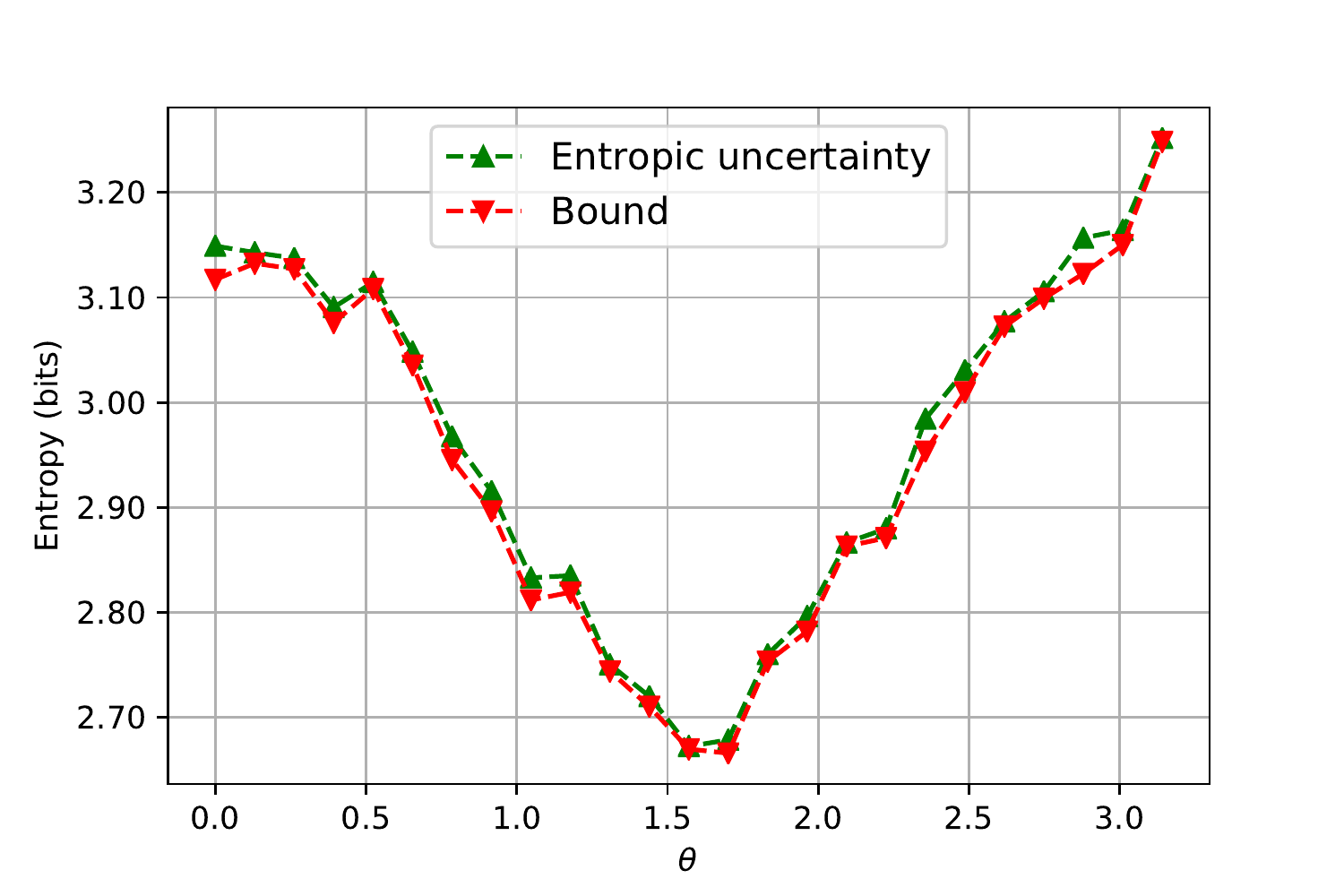}
	\caption{{\bf Performance of the entropic uncertainty bound for the bipartite game.} In this plot, we examine the tightness of our bound (\ref{eub-second}) for the bipartite guessing game. It can be seen that our bound is very close to the true value of the uncertainty, even for generic, non-product states.}\label{second_game_plots}
\end{figure}
\section{Conclusions}\label{conclusions}

 In this work, we utilized the commuting square framework to derive time-energy entropic uncertainty relations based on two different guessing games. 
Our bound for the tripartite game tightens a previous bound in Ref.\ \cite{TEEUR}, in a way similar to other improvements \cite{tomamichel2012framework,adabi2016tightening} made to the standard entropic uncertainty bound.
Our bounds also strengthen the understanding of time-energy uncertainty, by showing that there is a fundamental difference between the case where the quantum memory is split between two parties and the case where one party holds the whole quantum memory. More precisely, the former case renders a game that is impossible to win, while the latter corresponds to a game that is possible to win but only with quantum memory.

Our work demonstrates the power of the algebraic approach, which can also be applied to derive other entropic uncertainties. It remains open, however, how to extend our result to generic R\'enyi entropies. Some hints have already been given in Ref.\ \cite{SSA}, but it might still require a considerable amount of effort to generalize the algebraic approach to this more general setting.

%One limitation of the algebraic approach is the need to use conditional expectations, which rules out some natural approaches to mathematically formulating the setup. For instance, controlled unitaries would have been very useful in the present setting, but are not generally conditional expectations.  Nevertheless it is possible to circumvent this difficulties by using states that reduce to the relevant ones after a valid conditional expectation, which has been done in \ref{secrel}, or by finding relations that are equivalent to what one is looking for, which has been done in \ref{secondgame}. 

\begin{acknowledgments}
This work was supported by the Swiss National Science Foundation (SNSF) via the National Centre of Competence in Research “QSIT”, as well as the Air Force Office of Scientific Research (AFOSR) via grant  FA9550-19-1-0202. We thank Alessandra Ortelli for drawing Figure \ref{drawing_first_game} and Figure \ref{drawing_second_game}. 
\end{acknowledgments}
%\section*{References}

%\bibliographystyle{unsrt}
\bibliography{bibliography}

%apsrev4-2.bst 2019-01-14 (MD) hand-edited version of apsrev4-1.bst
%Control: key (0)
%Control: author (8) initials jnrlst
%Control: editor formatted (1) identically to author
%Control: production of article title (0) allowed
%Control: page (0) single
%Control: year (1) truncated
%Control: production of eprint (0) enabled
\begin{thebibliography}{20}%
\makeatletter
\providecommand \@ifxundefined [1]{%
 \@ifx{#1\undefined}
}%
\providecommand \@ifnum [1]{%
 \ifnum #1\expandafter \@firstoftwo
 \else \expandafter \@secondoftwo
 \fi
}%
\providecommand \@ifx [1]{%
 \ifx #1\expandafter \@firstoftwo
 \else \expandafter \@secondoftwo
 \fi
}%
\providecommand \natexlab [1]{#1}%
\providecommand \enquote  [1]{``#1''}%
\providecommand \bibnamefont  [1]{#1}%
\providecommand \bibfnamefont [1]{#1}%
\providecommand \citenamefont [1]{#1}%
\providecommand \href@noop [0]{\@secondoftwo}%
\providecommand \href [0]{\begingroup \@sanitize@url \@href}%
\providecommand \@href[1]{\@@startlink{#1}\@@href}%
\providecommand \@@href[1]{\endgroup#1\@@endlink}%
\providecommand \@sanitize@url [0]{\catcode `\\12\catcode `\$12\catcode
  `\&12\catcode `\#12\catcode `\^12\catcode `\_12\catcode `\%12\relax}%
\providecommand \@@startlink[1]{}%
\providecommand \@@endlink[0]{}%
\providecommand \url  [0]{\begingroup\@sanitize@url \@url }%
\providecommand \@url [1]{\endgroup\@href {#1}{\urlprefix }}%
\providecommand \urlprefix  [0]{URL }%
\providecommand \Eprint [0]{\href }%
\providecommand \doibase [0]{https://doi.org/}%
\providecommand \selectlanguage [0]{\@gobble}%
\providecommand \bibinfo  [0]{\@secondoftwo}%
\providecommand \bibfield  [0]{\@secondoftwo}%
\providecommand \translation [1]{[#1]}%
\providecommand \BibitemOpen [0]{}%
\providecommand \bibitemStop [0]{}%
\providecommand \bibitemNoStop [0]{.\EOS\space}%
\providecommand \EOS [0]{\spacefactor3000\relax}%
\providecommand \BibitemShut  [1]{\csname bibitem#1\endcsname}%
\let\auto@bib@innerbib\@empty
%</preamble>
\bibitem [{\citenamefont {Coles}\ \emph {et~al.}(2019)\citenamefont {Coles},
  \citenamefont {Katariya}, \citenamefont {Lloyd}, \citenamefont {Marvian},\
  and\ \citenamefont {Wilde}}]{TEEUR}%
  \BibitemOpen
  \bibfield  {author} {\bibinfo {author} {\bibfnamefont {P.~J.}\ \bibnamefont
  {Coles}}, \bibinfo {author} {\bibfnamefont {V.}~\bibnamefont {Katariya}},
  \bibinfo {author} {\bibfnamefont {S.}~\bibnamefont {Lloyd}}, \bibinfo
  {author} {\bibfnamefont {I.}~\bibnamefont {Marvian}},\ and\ \bibinfo {author}
  {\bibfnamefont {M.~M.}\ \bibnamefont {Wilde}},\ }\bibfield  {title} {\bibinfo
  {title} {Entropic energy-time uncertainty relation},\ }\href
  {https://doi.org/10.1103/PhysRevLett.122.100401} {\bibfield  {journal}
  {\bibinfo  {journal} {Physical Review Letters}\ }\textbf {\bibinfo {volume}
  {122}},\ \bibinfo {pages} {100401} (\bibinfo {year} {2019})},\ \Eprint
  {https://arxiv.org/abs/1805.07772} {arXiv:1805.07772 [quant-ph]} \BibitemShut
  {NoStop}%
\bibitem [{\citenamefont {Gao}\ \emph {et~al.}(2017)\citenamefont {Gao},
  \citenamefont {Junge},\ and\ \citenamefont {LaRacuente}}]{SSA}%
  \BibitemOpen
  \bibfield  {author} {\bibinfo {author} {\bibfnamefont {L.}~\bibnamefont
  {Gao}}, \bibinfo {author} {\bibfnamefont {M.}~\bibnamefont {Junge}},\ and\
  \bibinfo {author} {\bibfnamefont {N.}~\bibnamefont {LaRacuente}},\ }\bibfield
   {title} {\bibinfo {title} {Strong {{Subadditivity Inequality}} and
  {{Entropic Uncertainty Relations}}},\ }\href
  {http://arxiv.org/abs/1710.10038} {\bibfield  {journal} {\bibinfo  {journal}
  {arXiv:1710.10038 [quant-ph]}\ } (\bibinfo {year} {2017})}\BibitemShut
  {NoStop}%
\bibitem [{\citenamefont {Heisenberg}(1927)}]{heisenberg_uber_1927}%
  \BibitemOpen
  \bibfield  {author} {\bibinfo {author} {\bibfnamefont {W.}~\bibnamefont
  {Heisenberg}},\ }\bibfield  {title} {\bibinfo {title} {{\"U}ber den
  anschaulichen {{Inhalt}} der quantentheoretischen {{Kinematik}} und
  {{Mechanik}}},\ }\href {https://doi.org/10.1007/BF01397280} {\bibfield
  {journal} {\bibinfo  {journal} {Zeitschrift f{\"u}r Physik}\ }\textbf
  {\bibinfo {volume} {43}},\ \bibinfo {pages} {172} (\bibinfo {year}
  {1927})}\BibitemShut {NoStop}%
\bibitem [{\citenamefont {Kennard}(1927)}]{kennard1927}%
  \BibitemOpen
  \bibfield  {author} {\bibinfo {author} {\bibfnamefont {E.~H.}\ \bibnamefont
  {Kennard}},\ }\bibfield  {title} {\bibinfo {title} {Zur {Q}uantenmechanik
  einfacher {B}ewegungstypen},\ }\href {https://doi.org/10.1007/BF01391200}
  {\bibfield  {journal} {\bibinfo  {journal} {Zeitschrift f{\"u}r Physik}\
  }\textbf {\bibinfo {volume} {44}},\ \bibinfo {pages} {326} (\bibinfo {year}
  {1927})}\BibitemShut {NoStop}%
\bibitem [{\citenamefont {Maassen}\ and\ \citenamefont {Uffink}(1988)}]{MU}%
  \BibitemOpen
  \bibfield  {author} {\bibinfo {author} {\bibfnamefont {H.}~\bibnamefont
  {Maassen}}\ and\ \bibinfo {author} {\bibfnamefont {J.~B.~M.}\ \bibnamefont
  {Uffink}},\ }\bibfield  {title} {\bibinfo {title} {Generalized entropic
  uncertainty relations},\ }\href {https://doi.org/10.1103/PhysRevLett.60.1103}
  {\bibfield  {journal} {\bibinfo  {journal} {Physical Review Letters}\
  }\textbf {\bibinfo {volume} {60}},\ \bibinfo {pages} {1103} (\bibinfo {year}
  {1988})}\BibitemShut {NoStop}%
\bibitem [{\citenamefont {Renes}\ and\ \citenamefont
  {Boileau}(2009)}]{renes_conjectured_2009}%
  \BibitemOpen
  \bibfield  {author} {\bibinfo {author} {\bibfnamefont {J.~M.}\ \bibnamefont
  {Renes}}\ and\ \bibinfo {author} {\bibfnamefont {J.-C.}\ \bibnamefont
  {Boileau}},\ }\bibfield  {title} {\bibinfo {title} {Conjectured {{Strong
  Complementary Information Tradeoff}}},\ }\href
  {https://doi.org/10.1103/PhysRevLett.103.020402} {\bibfield  {journal}
  {\bibinfo  {journal} {Physical Review Letters}\ }\textbf {\bibinfo {volume}
  {103}},\ \bibinfo {pages} {020402} (\bibinfo {year} {2009})},\ \Eprint
  {https://arxiv.org/abs/0806.3984} {arXiv:0806.3984 [quant-ph]} \BibitemShut
  {NoStop}%
\bibitem [{\citenamefont {Berta}\ \emph {et~al.}(2010)\citenamefont {Berta},
  \citenamefont {Christandl}, \citenamefont {Colbeck}, \citenamefont {Renes},\
  and\ \citenamefont {Renner}}]{qmem}%
  \BibitemOpen
  \bibfield  {author} {\bibinfo {author} {\bibfnamefont {M.}~\bibnamefont
  {Berta}}, \bibinfo {author} {\bibfnamefont {M.}~\bibnamefont {Christandl}},
  \bibinfo {author} {\bibfnamefont {R.}~\bibnamefont {Colbeck}}, \bibinfo
  {author} {\bibfnamefont {J.~M.}\ \bibnamefont {Renes}},\ and\ \bibinfo
  {author} {\bibfnamefont {R.}~\bibnamefont {Renner}},\ }\bibfield  {title}
  {\bibinfo {title} {The uncertainty principle in the presence of quantum
  memory},\ }\href {https://doi.org/10.1038/nphys1734} {\bibfield  {journal}
  {\bibinfo  {journal} {Nature Physics}\ }\textbf {\bibinfo {volume} {6}},\
  \bibinfo {pages} {659} (\bibinfo {year} {2010})},\ \Eprint
  {https://arxiv.org/abs/0909.0950} {arXiv:0909.0950 [quant-ph]} \BibitemShut
  {NoStop}%
\bibitem [{\citenamefont {Tomamichel}\ and\ \citenamefont
  {Renner}(2011)}]{tomamichel2011uncertainty}%
  \BibitemOpen
  \bibfield  {author} {\bibinfo {author} {\bibfnamefont {M.}~\bibnamefont
  {Tomamichel}}\ and\ \bibinfo {author} {\bibfnamefont {R.}~\bibnamefont
  {Renner}},\ }\bibfield  {title} {\bibinfo {title} {Uncertainty {{Relation}}
  for {{Smooth Entropies}}},\ }\href
  {https://doi.org/10.1103/PhysRevLett.106.110506} {\bibfield  {journal}
  {\bibinfo  {journal} {Physical Review Letters}\ }\textbf {\bibinfo {volume}
  {106}},\ \bibinfo {pages} {110506} (\bibinfo {year} {2011})},\ \Eprint
  {https://arxiv.org/abs/1009.2015} {arXiv:1009.2015 [quant-ph]} \BibitemShut
  {NoStop}%
\bibitem [{\citenamefont {Coles}\ \emph {et~al.}(2011)\citenamefont {Coles},
  \citenamefont {Yu}, \citenamefont {Gheorghiu},\ and\ \citenamefont
  {Griffiths}}]{Coles2011Information}%
  \BibitemOpen
  \bibfield  {author} {\bibinfo {author} {\bibfnamefont {P.~J.}\ \bibnamefont
  {Coles}}, \bibinfo {author} {\bibfnamefont {L.}~\bibnamefont {Yu}}, \bibinfo
  {author} {\bibfnamefont {V.}~\bibnamefont {Gheorghiu}},\ and\ \bibinfo
  {author} {\bibfnamefont {R.~B.}\ \bibnamefont {Griffiths}},\ }\bibfield
  {title} {\bibinfo {title} {Information-theoretic treatment of tripartite
  systems and quantum channels},\ }\href
  {https://doi.org/10.1103/PhysRevA.83.062338} {\bibfield  {journal} {\bibinfo
  {journal} {Physical Review A}\ }\textbf {\bibinfo {volume} {83}},\ \bibinfo
  {pages} {062338} (\bibinfo {year} {2011})},\ \Eprint
  {https://arxiv.org/abs/1006.4859} {arXiv:1006.4859 [quant-ph]} \BibitemShut
  {NoStop}%
\bibitem [{\citenamefont {Coles}\ \emph {et~al.}(2012)\citenamefont {Coles},
  \citenamefont {Colbeck}, \citenamefont {Yu},\ and\ \citenamefont
  {Zwolak}}]{coles2012uncertainty}%
  \BibitemOpen
  \bibfield  {author} {\bibinfo {author} {\bibfnamefont {P.~J.}\ \bibnamefont
  {Coles}}, \bibinfo {author} {\bibfnamefont {R.}~\bibnamefont {Colbeck}},
  \bibinfo {author} {\bibfnamefont {L.}~\bibnamefont {Yu}},\ and\ \bibinfo
  {author} {\bibfnamefont {M.}~\bibnamefont {Zwolak}},\ }\bibfield  {title}
  {\bibinfo {title} {Uncertainty relations from simple entropic properties},\
  }\href {https://doi.org/10.1103/PhysRevLett.108.210405} {\bibfield  {journal}
  {\bibinfo  {journal} {Physical review letters}\ }\textbf {\bibinfo {volume}
  {108}},\ \bibinfo {pages} {210405} (\bibinfo {year} {2012})},\ \Eprint
  {https://arxiv.org/abs/1112.0543} {arXiv:1112.0543 [quant-ph]} \BibitemShut
  {NoStop}%
\bibitem [{\citenamefont {Furrer}\ \emph {et~al.}(2014)\citenamefont {Furrer},
  \citenamefont {Berta}, \citenamefont {Tomamichel}, \citenamefont {Scholz},\
  and\ \citenamefont {Christandl}}]{furrer2014position}%
  \BibitemOpen
  \bibfield  {author} {\bibinfo {author} {\bibfnamefont {F.}~\bibnamefont
  {Furrer}}, \bibinfo {author} {\bibfnamefont {M.}~\bibnamefont {Berta}},
  \bibinfo {author} {\bibfnamefont {M.}~\bibnamefont {Tomamichel}}, \bibinfo
  {author} {\bibfnamefont {V.~B.}\ \bibnamefont {Scholz}},\ and\ \bibinfo
  {author} {\bibfnamefont {M.}~\bibnamefont {Christandl}},\ }\bibfield  {title}
  {\bibinfo {title} {Position-momentum uncertainty relations in the presence of
  quantum memory},\ }\href {https://doi.org/10.1063/1.4903989} {\bibfield
  {journal} {\bibinfo  {journal} {Journal of Mathematical Physics}\ }\textbf
  {\bibinfo {volume} {55}},\ \bibinfo {pages} {122205} (\bibinfo {year}
  {2014})},\ \Eprint {https://arxiv.org/abs/1308.4527} {arXiv:1308.4527
  [quant-ph]} \BibitemShut {NoStop}%
\bibitem [{\citenamefont {Coles}\ \emph {et~al.}(2017)\citenamefont {Coles},
  \citenamefont {Berta}, \citenamefont {Tomamichel},\ and\ \citenamefont
  {Wehner}}]{survey}%
  \BibitemOpen
  \bibfield  {author} {\bibinfo {author} {\bibfnamefont {P.~J.}\ \bibnamefont
  {Coles}}, \bibinfo {author} {\bibfnamefont {M.}~\bibnamefont {Berta}},
  \bibinfo {author} {\bibfnamefont {M.}~\bibnamefont {Tomamichel}},\ and\
  \bibinfo {author} {\bibfnamefont {S.}~\bibnamefont {Wehner}},\ }\bibfield
  {title} {\bibinfo {title} {Entropic uncertainty relations and their
  applications},\ }\href {https://doi.org/10.1103/RevModPhys.89.015002}
  {\bibfield  {journal} {\bibinfo  {journal} {Reviews of Modern Physics}\
  }\textbf {\bibinfo {volume} {89}},\ \bibinfo {pages} {015002} (\bibinfo
  {year} {2017})},\ \Eprint {https://arxiv.org/abs/1511.04857}
  {arXiv:1511.04857 [quant-ph]} \BibitemShut {NoStop}%
\bibitem [{\citenamefont {Pauli}(1990)}]{pauli}%
  \BibitemOpen
  \bibfield  {author} {\bibinfo {author} {\bibfnamefont {W.}~\bibnamefont
  {Pauli}},\ }\bibinfo {title} {Die allgemeinen prinzipien der
  wellenmechanik},\ in\ \href {https://doi.org/10.1007/978-3-642-61287-9_2}
  {\emph {\bibinfo {booktitle} {Die allgemeinen Prinzipien der Wellenmechanik:
  Neu herausgegeben und mit historischen Anmerkungen versehen von Norbert
  Straumann}}}\ (\bibinfo  {publisher} {Springer Berlin Heidelberg},\ \bibinfo
  {address} {Berlin, Heidelberg},\ \bibinfo {year} {1990})\ pp.\ \bibinfo
  {pages} {21--192}\BibitemShut {NoStop}%
\bibitem [{\citenamefont {Pauli}(2012)}]{pauli2012general}%
  \BibitemOpen
  \bibfield  {author} {\bibinfo {author} {\bibfnamefont {W.}~\bibnamefont
  {Pauli}},\ }\href {https://doi.org/10.1007/978-3-642-61840-6} {\emph
  {\bibinfo {title} {General Principles of Quantum Mechanics}}}\ (\bibinfo
  {publisher} {Springer Science \& Business Media},\ \bibinfo {year}
  {2012})\BibitemShut {NoStop}%
\bibitem [{\citenamefont {Susskind}\ and\ \citenamefont
  {Glogower}(1964)}]{susskind1964quantum}%
  \BibitemOpen
  \bibfield  {author} {\bibinfo {author} {\bibfnamefont {L.}~\bibnamefont
  {Susskind}}\ and\ \bibinfo {author} {\bibfnamefont {J.}~\bibnamefont
  {Glogower}},\ }\bibfield  {title} {\bibinfo {title} {Quantum mechanical phase
  and time operator},\ }\href
  {https://doi.org/10.1103/PhysicsPhysiqueFizika.1.49} {\bibfield  {journal}
  {\bibinfo  {journal} {Physics Physique Fizika}\ }\textbf {\bibinfo {volume}
  {1}},\ \bibinfo {pages} {49} (\bibinfo {year} {1964})}\BibitemShut {NoStop}%
\bibitem [{\citenamefont {Rastegin}(2019)}]{rastegin2019entropic}%
  \BibitemOpen
  \bibfield  {author} {\bibinfo {author} {\bibfnamefont {A.~E.}\ \bibnamefont
  {Rastegin}},\ }\bibfield  {title} {\bibinfo {title} {On entropic uncertainty
  relations for measurements of energy and its “complement”},\ }\href
  {https://doi.org/10.1002/andp.201800466} {\bibfield  {journal} {\bibinfo
  {journal} {Annalen der Physik}\ }\textbf {\bibinfo {volume} {531}},\ \bibinfo
  {pages} {1800466} (\bibinfo {year} {2019})},\ \Eprint
  {https://arxiv.org/abs/1807.11413} {arXiv:1807.11413 [quant-ph]} \BibitemShut
  {NoStop}%
\bibitem [{Note1()}]{Note1}%
  \BibitemOpen
  \bibinfo {note} {Since the rotation does not affect the measurement, we could
  also say that $A$ always applies a random rotation and then randomly chooses
  whether to measure $G$. This version is more easily interpretable if one
  wants to consider time evolution as the rotation.}\BibitemShut {Stop}%
\bibitem [{\citenamefont {Johansson}\ \emph {et~al.}(2012)\citenamefont
  {Johansson}, \citenamefont {Nation},\ and\ \citenamefont
  {Nori}}]{JOHANSSON20121760}%
  \BibitemOpen
  \bibfield  {author} {\bibinfo {author} {\bibfnamefont {J.}~\bibnamefont
  {Johansson}}, \bibinfo {author} {\bibfnamefont {P.}~\bibnamefont {Nation}},\
  and\ \bibinfo {author} {\bibfnamefont {F.}~\bibnamefont {Nori}},\ }\bibfield
  {title} {\bibinfo {title} {Qutip: An open-source python framework for the
  dynamics of open quantum systems},\ }\href
  {https://doi.org/https://doi.org/10.1016/j.cpc.2012.02.021} {\bibfield
  {journal} {\bibinfo  {journal} {Computer Physics Communications}\ }\textbf
  {\bibinfo {volume} {183}},\ \bibinfo {pages} {1760 } (\bibinfo {year}
  {2012})},\ \Eprint {https://arxiv.org/abs/1110.0573} {arXiv:1110.0573
  [quant-ph]} \BibitemShut {NoStop}%
\bibitem [{\citenamefont {Tomamichel}(2012)}]{tomamichel2012framework}%
  \BibitemOpen
  \bibfield  {author} {\bibinfo {author} {\bibfnamefont {M.}~\bibnamefont
  {Tomamichel}},\ }\emph {\bibinfo {title} {A framework for non-asymptotic
  quantum information theory}},\ \href {https://doi.org/10.3929/ethz-a-7356080}
  {Ph.D. thesis},\ \bibinfo  {school} {ETH Zurich} (\bibinfo {year} {2012}),\
  \Eprint {https://arxiv.org/abs/1203.2142} {arXiv:1203.2142 [quant-ph]}
  \BibitemShut {NoStop}%
\bibitem [{\citenamefont {Adabi}\ \emph {et~al.}(2016)\citenamefont {Adabi},
  \citenamefont {Salimi},\ and\ \citenamefont {Haseli}}]{adabi2016tightening}%
  \BibitemOpen
  \bibfield  {author} {\bibinfo {author} {\bibfnamefont {F.}~\bibnamefont
  {Adabi}}, \bibinfo {author} {\bibfnamefont {S.}~\bibnamefont {Salimi}},\ and\
  \bibinfo {author} {\bibfnamefont {S.}~\bibnamefont {Haseli}},\ }\bibfield
  {title} {\bibinfo {title} {Tightening the entropic uncertainty bound in the
  presence of quantum memory},\ }\href
  {https://doi.org/10.1103/PhysRevA.93.062123} {\bibfield  {journal} {\bibinfo
  {journal} {Physical Review A}\ }\textbf {\bibinfo {volume} {93}},\ \bibinfo
  {pages} {062123} (\bibinfo {year} {2016})},\ \Eprint
  {https://arxiv.org/abs/1602.04296} {arXiv:1602.04296 [quant-ph]} \BibitemShut
  {NoStop}%
\end{thebibliography}%

\end{document}